\documentclass[a4paper,11pt,oneside]{article}
\usepackage[a4paper, margin=1in]{geometry}
\usepackage{setspace}
\usepackage{textcomp}
\usepackage{enumerate}
\usepackage{amsmath,amssymb,amsfonts,amsthm,mathrsfs}
\usepackage{courier}
\usepackage{algpseudocode,algorithm}
\usepackage{cancel, cases}
\usepackage{soul}
\usepackage{float,graphicx,subcaption,epstopdf}
\usepackage{longtable,tabu,threeparttable,rotating,adjustbox,booktabs}
\usepackage[normalem]{ulem}
\usepackage[dvipsnames]{xcolor}
\usepackage{tcolorbox} 

\usepackage{hyperref}

\usepackage{caption}

\usepackage[round, authoryear]{natbib}

\DeclareFontFamily{OT1}{pzc}{} \DeclareFontShape{OT1}{pzc}{m}{it}%
{<-> s * [0.900] pzcmi7t}{} \DeclareMathAlphabet{\mathpzc}{OT1}{pzc}%
{m}{it}

\newcommand\tr{\text{tr}}

\newcommand{\bi}{\begin{itemize}}
\newcommand{\ei}{\end{itemize}}
\DeclareMathOperator*{\argmin}{argmin}
\DeclareMathOperator*{\argmax}{argmax}

\theoremstyle{plain}
\newtheorem{theorem}{Theorem}
\newtheorem{corollary}[theorem]{Corollary}

\newtheorem{proposition}[theorem]{Proposition}

\theoremstyle{definition}
\newtheorem{definition}[theorem]{Definition}
\newtheorem{assumption}[theorem]{Assumption}

\theoremstyle{remark}
\newtheorem{remark}{Remark}

\numberwithin{equation}{section}
\numberwithin{theorem}{section}

\title{Reinforcement Learning for Risk-Sensitive Investment Management: a Free Energy--Entropy Duality Approach}
\author{S\'ebastien Lleo\footnote{Finance Department and `AI, Data Science \& Business' AE, NEOMA Business School, France} \and Wolfgang Runggaldier\footnote{University of Padova, Italy, and Fellow, Institut Louis Bachelier, Paris, France}}
\date{\today}

\begin{document}
\maketitle

\begin{abstract}
This paper develops a reinforcement-learning approach to continuous-time risk-sensitive benchmarked asset allocation in a partly model-based setting. The benchmarked problem does not directly fit the standard Markovian stochastic-control template: the state is uncontrolled, whereas the terminal reward contains a controlled It\^o integral. We use free energy--entropy duality to reformulate the problem as a linear--quadratic--Gaussian stochastic differential game under an equivalent probability measure, yielding explicit finite- and infinite-horizon saddle-point solutions. This structure guides a continuous-time $q$-learning actor--critic method: the quadratic value function motivates the critic, while the affine saddle-point controls motivate deterministic actors for the portfolio allocation and adversarial control. The learned allocation admits an economic interpretation through fractional Kelly decompositions. A proof-of-concept implementation calibrated to U.S. equity data shows that the actors learn the optimal policy with high accuracy and reveals a favorable asymmetry: the portfolio actor receives a cleaner learning signal than the auxiliary adversarial actor.
\end{abstract}

\textbf{Keywords:} actor--critic methods, Fractional Kelly strategies, Free energy--entropy duality, Policy gradient, Reinforcement learning, Risk-sensitive control, Stochastic games.

\textbf{JEL Classification}: C32; C44; C61; C73; G11; G12.

\textbf{MSC classes:} 68T05, 91G10; 91G80; 93E20.

\section{Introduction}\label{sec:Intro}

In continuous-time asset allocation models, investors are typically assumed to know the model parameters or to estimate them with sufficient accuracy. In practice, however, expected returns, factor exposures, and covariance matrices are difficult to estimate. Moreover, estimation error can have a substantial effect on optimal portfolios, even in classical mean--variance settings \citep{Chopra1993}. One response is to model the problem under partial observation, leading to a joint estimation--control problem and raising questions about the separation of estimation and control \citep[see][]{LleoRunggaldierSeparation24}. Reinforcement learning (RL) offers a complementary route \citep{SuttonBarto2018}: it seeks to learn investment policies from data and feedback, thereby reducing reliance on fully specified and precisely estimated models.

This paper develops an RL approach to a continuous-time risk-sensitive benchmarked asset allocation problem in a partly model-based setting. The investment problem is motivated by a price-taking asset manager who observes market prices, allocates across a factor-driven investment universe, and is evaluated relative to a stochastic benchmark. The RL formulation uses the structure of the analytical solution to design interpretable actor and critic parametrizations, while relaxing the assumption that all model coefficients are known.

The starting point is the risk-sensitive benchmarked asset allocation model. This model is a natural foundation for financial engineering applications: the market is factor-based, the performance criterion is measured against a benchmark, the risk-sensitive objective is connected to both utility theory and continuous-time mean--variance optimization, and the optimal allocation admits an interpretation through the Kelly criterion. At the same time, the benchmarked problem does not directly fit the standard Markovian stochastic-control template. The state process is uncontrolled, while the terminal reward contains a controlled It\^o integral that does not disappear under expectation. We therefore use the free energy--entropy duality of \citet{LleoRunggaldier_RSIMviaDuality_2026} to reformulate the benchmarked risk-sensitive problem as a linear--quadratic--Gaussian stochastic differential game under an equivalent probability measure. The resulting game has a quadratic value function and explicit affine saddle-point controls.

The paper first develops the finite- and infinite-horizon versions of this duality-based formulation. The finite-horizon case provides the foundation for episodic actor--critic learning, while the infinite-horizon case leads to an ergodic game and a continuing-task RL formulation. The ergodic setting is especially relevant for institutional investment management, where the investment organization is naturally modeled as a going concern focused on long-run risk-adjusted performance.

Building on this stochastic-game representation, we develop continuous-time $q$-learning actor--critic methods for the benchmarked investment problem. Our construction follows the martingale approach to continuous-time RL developed by \citet{jiaPolicyEvaluationTemporalDifference2022,jiaPolicyGradientActor2022,jiaQLearningContinuousTime}, and is related to the stochastic-control perspective on continuous-time RL in \citet{wangReinforcementLearningContinuous2020}. We specialize this framework to the LQG game generated by the duality. This specialization is useful: the quadratic form of the value function motivates the critic parametrization, while the affine form of the saddle-point controls motivates the actor parametrization. The target actors are deterministic Markov policies in the sense of \citet{silverDeterministicPolicyGradient2014}: they map the time--state pair directly into the portfolio control and the adversarial control. Exploration, when used for learning, is introduced through behavior policies rather than through the target policy itself.

The paper contributes to the literature in four ways. First, it develops a reinforcement-learning formulation and implementation for a continuous-time risk-sensitive investment problem. The closest related work in continuous-time risk-sensitive RL is \citet{jiaContinuousTimeRiskSensitiveReinforcement2026}, who studies continuous-time risk-sensitive RL under an entropy-regularized exploratory diffusion formulation. Our approach is complementary. Jia considers a general risk-sensitive RL formulation with exploratory policies, whereas we start from a benchmarked investment problem, use free energy--entropy duality to recast it as an LQG stochastic differential game, and then design deterministic actor--critic updates adapted to the resulting structure. Both approaches highlight the importance of learning with both a state-value function and a $q$-function.

Second, the paper contributes to the emerging literature on continuous-time reinforcement learning with deterministic policies. In discrete-time RL with continuous action spaces, deterministic policy gradient methods are not merely a limiting case of stochastic policy gradient methods \citep{silverDeterministicPolicyGradient2014}. An analogous distinction arises in continuous time. \citet{chengDeterministicPolicyGradient2026} derive a continuous-time deterministic policy gradient formula based on an advantage-rate function, establish a martingale characterization, and use it to motivate a model-free continuous-time deterministic policy gradient algorithm. Our paper takes a different but related route: rather than pursuing a fully model-free method, we use the analytical structure of a financial stochastic game to reduce model dependence while retaining interpretability.

Third, the paper contributes to the growing literature on RL for portfolio selection and asset management, including \citet{kolmModernPerspectivesReinforcement2020,kolmReinforcementLearningAsset2025,halperinChapter6Reinforcement2025}, \citet{jiangReinforcementLearningKelly2022}, and \citet{lleo2026exploratoryrandomizationRSBAM}. Much of this literature formulates portfolio RL through risk-adjusted rewards, transaction costs, or other dynamic portfolio objectives. By contrast, our model starts from continuous-time risk-sensitive control and benchmarked performance. It is therefore inherently dynamic, factor-based, and benchmark-relative. The Kelly portfolio appears as a limiting case, while the benchmarked allocation admits a fractional Kelly interpretation.

Fourth, the paper addresses interpretability. A common criticism of RL methods in investment management is that their recommendations may be difficult to explain to clients, investment committees, and regulators. We show how the structure of the benchmarked risk-sensitive solution can be used to decompose the learned allocation into a Kelly portfolio, a benchmark-tracking portfolio, and an intertemporal hedging portfolio. This decomposition makes the RL output economically interpretable at a limited additional computational cost.

Finally, we provide a proof-of-concept implementation for the continuing case. The implementation considers a six-factor U.S. equity allocation problem with $m=13$ exchange-traded funds and a replicable benchmark. We use these data to compute the parameters for the analytical ergodic risk-sensitive benchmarked asset management model. The learning speed and accuracy of the actor--critic algorithm are assessed against this analytical model. The experiment focuses on learning the optimal portfolio and adversarial policies, while the value function is computed from the analytical model. The algorithm recovers the analytical saddle point with high accuracy. The diagnostics also reveal a structural asymmetry induced by the duality: the portfolio-gradient critic is learned essentially exactly, while the adversarial-gradient critic retains a small residual error because the adversarial control enters the state transition. This asymmetry is favorable in asset-management applications, since the portfolio allocation is the economically relevant control.

The remainder of the paper is organized as follows. Section~\ref{sec:RSBAM} presents the finite-horizon duality-based formulation. Section~\ref{sec:RSBAM:infinite} develops the ergodic counterpart. Section~\ref{sec:PolicyGradientMethods} derives the continuous-time actor--critic framework and discusses explainability through fractional Kelly decompositions. Section~\ref{sec:Implementation} presents the proof-of-concept implementation. Section~\ref{sec:Conclusion} concludes.

\section{Risk-Sensitive Benchmarked Asset Management Problem: Solution via the Free Energy--Entropy Duality}
\label{sec:RSBAM}

This section sets out the finite-horizon benchmarked risk-sensitive investment problem and its dual stochastic-game representation. The formulation follows \citet{LleoRunggaldier_RSIMviaDuality_2026} and provides the finite-horizon duality foundation for the reinforcement-learning methods developed below.
The setting is a continuous-time factor model with benchmarked performance, exponential risk sensitivity, and an equivalent stochastic differential game obtained through the free energy--entropy duality.

Let $\left(\Omega,\mathcal F,(\mathcal F_s)_{0\leq s\leq T},\mathbb P\right)$
be a filtered complete probability space supporting a $d$-dimensional $(\mathcal F_s)$-Wiener process $W$, where $d:=n+m+1$ for integers $n,m\geq1$. We use terminology that connects stochastic control and reinforcement learning. A \emph{strategy} or \emph{policy} is a stochastic process $H=(h_s)_{s\in[t,T]}$. The value $h_s\in\mathbb R^m$ is the \emph{control}, or \emph{action}, at time $s$. A Markov \emph{strategy} or \emph{policy} is a measurable map
$
h:[t,T]\times\mathbb R^n\to\mathbb R^m,
$
so that the Markov \emph{control} $h_s = h(s,X_s)$.

\subsection{Model for the Financial Market}
\label{sec:RSBAM:model}

The investor trades in $m$ risky financial assets $S = (S_s)_{s\in[0,T]}$. Their discounted prices satisfy
\begin{align}\label{eq:dS}
dS_s
=
\text{diag}(S_s)\left[(a_s+A_sX_s)\,ds+\Sigma_s,dW_s\right],
\end{align}
where $\text{diag}(\cdot)$ maps a vector into a diagonal matrix. The factor process $X=(X_s){s\in[0,T]}$, also called the state process, represents financial and economic factors affecting asset returns and follows
\begin{align}\label{eq:state}
    dX_s=(b_s+B_sX_s)\,ds+\Lambda_s,dW_s.
\end{align}
The initial state $X_t$ is either deterministic, with $X_t=x$, or random. In the latter case, $X_t\sim N(\mu,P)$, with known $\mu\in\mathbb R^n$ and $P\in\mathbb R^{n\times n}$, independently of $W$. The discounted benchmark index $L=(L_s)_{s\in[0,T]}$ satisfies
\begin{align}
\frac{dL_s}{L_s}
=
(c_s+C_sX_s)\,ds+\Xi_s'dW_s.
\label{eq:dL}
\end{align}

Throughout the paper, we write $a_s, A_s,\Sigma_s$, etc., for $a(s),A(s),\Sigma(s)$, etc. Vectors are column vectors unless otherwise stated, and a prime denotes matrix transposition. For notational convenience, $C_s$ is treated as a row vector in expressions involving the benchmark drift, so that we write $C_sX_s$ rather than $C_s'X_s$. Moreover, the coefficients of equations \eqref{eq:dS}-\eqref{eq:dL} satisfy the following standing assumption:
\begin{assumption}\label{as:coefficients}
    The coefficient functions in \eqref{eq:dS}--\eqref{eq:dL} are continuously differentiable on $[0,T]$ and have the dimensions specified above.
\end{assumption}

We also assume that the risky assets have a nonsingular instantaneous covariance matrix:
\begin{assumption}\label{as:sigma:posdef}
For every \(s\in[0,T]\), the matrix \(\Sigma_s\Sigma_s'\) is positive definite.
\end{assumption}

We model the self-financing investment policy as an $\mathcal{F}_s$-adapted process $H = \left(h_s\right)_{s\in [0,T]} \in \mathbb{R}^m$ belonging to the class $\mathcal{A}^H_T$.

\begin{definition}[Admissible (Target) Investment Policies]\label{def:classAT:fullobs}

A \emph{policy} $H = \left(h_s\right)_{s \in [0,T]} \in \mathbb{R}^{m}$ is in class $\mathcal{A}^{H}_T$ if the \emph{control} process $h_s$ is a Markov control and $\mathbb P\left(\int_{0}^{T} \left| h_s \right|^2 \, ds < +\infty \right) = 1$.

\end{definition}

The \emph{discounted} wealth process $\left(V_s\right)_{s \in [0,T]}$ generated by the investment policy $H$ solves the SDE:
\begin{align}\label{eq:V}
\frac{dV_s}{V_s}
&= h_s'\left(a_s+ A_s X_s\right) \, ds
+ h_s' \Sigma_s  dW_s.
\end{align}
To track the investment portfolio's cumulative outperformance relative to its benchmark, we introduce the \emph{log price relative} process $\left(R_s\right)_{s \in [0,T]}$, defined as $R_s:= \ln \frac{V_s}{L_s}$. This process solves the SDE:
\begin{align}\label{eq:excess_return}
dR_s = \left[ 
  \left(- \frac{1}{2} h_s'\Sigma_s\Sigma_s'h_s 
  + h_s'a_s 
  + \frac{1}{2}\Xi_s'\Xi_s - c_s \right)  
  + \left(h_s' A_s - C_s \right)X_s\right] \, ds
  + \left(h_s'\Sigma_s - \Xi_s' \right) \, dW_s. 
\end{align}
From here, we obtain the \emph{log excess return} of the portfolio over its benchmark as the difference $R_T - R_0$.

The investor seeks to maximize the risk-sensitive benchmarked criterion: 
\begin{align}\label{eq:J} 
J_T(H,\theta)
&:= -\frac{1}{\theta} \ln \mathbf{E} \left[ e^{-\theta (R_T-R_0)} \right]
,
\end{align}
where $\theta \in (-1,0)\cup(0,\infty)$ is the risk-sensitivity parameter and $T<\infty$ is a fixed time horizon. Equivalently, one may minimize the exponentially transformed criterion
\begin{align}\label{eq:I}
I_T(H,\theta)
&:=
e^{-\theta J_T(H,\theta)}
=
\mathbf{E}\left[e^{-\theta(R_T-R_0)}\right].
\end{align}
The case $\theta>0$ is the primary one, as noted in \citet{LleoRunggaldier_RSIMviaDuality_2026}, so we focus on it in this paper.

\subsection{Using the Free Energy--Entropy Duality to Express the Risk-Sensitive Control Problem as a Stochastic Differential Game}\label{sec:RSBAM:RSC_problem}

Using the \emph{free energy--entropy duality}, Section 2.2 in \citet{LleoRunggaldier_RSIMviaDuality_2026} proved that 
\begin{align}\label{eq:EEDuality:inf}
    \inf_{H \in \mathcal{A}^H_T} I_T(H,\theta) 
    = \exp\left\{ \inf_{H \in \mathcal{A}^H_T} \sup_{ \Gamma \in \mathcal{A}^\Gamma_T} \mathbf{E}^{\mathbb{P}^\Gamma} \left[ \theta \int_{0}^{T} g(s,X_s,h_s,\gamma_s;\theta) \, ds
    \right] 
    \right\}
,
\end{align}
where $\Gamma = \left(\gamma_s\right)_{s \in [0,T]} \in \mathbb{R}^d$ is an $\mathcal{F}_{s}$-adapted, a.s. square integrable process belonging to the class $\mathcal{A}^\Gamma_T$ which we define below, $\mathbf{E}^{\mathbb{P}^\Gamma} \left[ \cdot \right]$ denotes the expectation taken with respect to an equivalent measure $\mathbb{P}^\Gamma$, and where
\begin{align}\label{eq:g}
    g(s,x,h,\gamma;\theta) 
    :=  \frac{1}{2} h'\Sigma_s\Sigma_s'h
        - h' a_s 
        - \frac{1}{2}\Xi_s'\Xi_s 
        + c_s  
        - \left(h'\Sigma_s - \Xi_s' \right)\gamma
        - \left(h' A_s - C_s \right)x 
        - \frac{1}{2\theta} \|\gamma\|^2.
\end{align}

The measure $\mathbb{P}^\Gamma$ is parameterized by the process $\Gamma$. By Girsanov's theorem, the Radon--Nikodym derivative $\frac{d\mathbb{P}^\Gamma}{d\mathbb{P}}$ restricted to $\mathcal{F}_T$ is 
\begin{align}\label{eq:RNderivative:gamma}
    \frac{d\mathbb{P}^\Gamma}{d\mathbb{P}}\Big{|}_{\mathcal{F}_T}
 = \exp \left\{ -\frac{1}{2} \int_0^T  \| \gamma_s \|^2 ds + \int_0^T \gamma_s' dW_s \right\} =: \chi^\Gamma_{[0,T]}, 
\end{align} 
and the $\mathbb{P}^\Gamma$-Wiener process $W^\Gamma$ is given by
$W^\Gamma_s := W_s - \int_0^s \gamma_s \, ds$.
 
Under the measure $\mathbb{P}^\Gamma$, the dynamics of the state process $X$ satisfies
\begin{align}\label{eq:state:Pgamma:FO}
    d X_s
    = \left(b_s + B_s X_s + \Lambda_s \gamma_s \right) \, ds
    + \Lambda_s \, dW^\Gamma_s,
\end{align}
and analogously for \eqref{eq:dS} and \eqref{eq:dL}. Therefore, the SDE for the log price relative process $R$ under $\mathbb{P}^\Gamma$ is
\begin{align}\label{eq:excess_return:Pgamma:FO}
&   dR_s 
                            \nonumber\\
=& \left[ 
    - \frac{1}{2} h_s'\Sigma_s\Sigma_s'h_s 
  + h_s'a_s 
  + \frac{1}{2}\Xi_s'\Xi_s - c_s  
  + \left(h_s'\Sigma_s - \Xi_s' \right)\gamma_s
  + \left(h_s' A_s - C_s \right)X_s\right] \, ds
  + \left(h_s'\Sigma_s - \Xi_s' \right) \, dW^\Gamma_s.
\end{align}

Finally, the class $\mathcal{A}^\Gamma_T$ of policies $\Gamma$ is:

\begin{definition}[Admissible (Target) Adversarial Policies]\label{def:classAgammaT:fullobs}

A \emph{policy} $\Gamma = \left(\gamma_s\right)_{s \in [0,T]} \in \mathbb{R}^{d}$  is in class $\mathcal{A}^\Gamma_T$ if (i) the \emph{control} process $\gamma_s$ is a Markov control; (ii) $\mathbb P\left(\int_{0}^{T} \left| \gamma_s \right|^2 \, ds < +\infty \right) = 1$; and (iii) $\chi^\Gamma_{[0,T]}$, defined at \eqref{eq:RNderivative:gamma}, is an exponential martingale.

\end{definition}

The right-hand side of \eqref{eq:EEDuality:inf} defines a two-player stochastic differential game with objective
\begin{align}\label{eq:def:SDG}
\inf_{H \in \mathcal{A}^H_T} \sup_{ \Gamma \in \mathcal{A}^\Gamma_T} \mathbf{E}^{\mathbb{P}^\Gamma} \left[ \theta \int_{0}^{T} g(s,X_s,h_s,\gamma_s;\theta) \, ds
\right],
\end{align}
where the strategy of the minimizing player is $H = \left(h_s\right)_{s \in [0,T]} \in \mathbb{R}^{m}$ in class $\mathcal{A}^{H}_T$, the strategy of the maximizing player is $\Gamma = \left(\gamma_s\right)_{s \in [0,T]} \in \mathbb{R}^{d}$ in class $\mathcal{A}^\Gamma_T$, and the state process dynamics is given at \eqref{eq:state:Pgamma:FO}.

\subsection{Finite-Horizon Duality-Based Solution}\label{sec:mainresult}

For a generic $t\in[0,T)$  and with $R_t=0$, the criteria in \eqref{eq:J} and \eqref{eq:I} become
\begin{align}\label{eq:IJ:general}
    J_T(t,x;H,\theta) := -\frac{1}{\theta} \ln \mathbf{E}_{t,x} \left[ e^{-\theta R_T} \right],
    \qquad
    I_T(t,x;H,\theta) 
    := e^{-\theta J_T(t,x;H,\theta)} 
    = \mathbf{E}_{t,x} \left[ e^{-\theta R_T} \right],
\end{align}
so the duality at \eqref{eq:EEDuality:inf} generalizes to
\begin{align}\label{eq:EEDuality:inf:SDG} 
\inf_{H \in \mathcal{A}^H_T} I_T(t,x;H,\theta) 
    = \exp\left\{ \inf_{H \in \mathcal{A}^H_T} \sup_{ \Gamma \in \mathcal{A}^\Gamma_T} \mathbf{E}_{t,x}^{\mathbb{P}^\Gamma} \left[ \theta \int_{t}^{T} g(s,X_s,h_s,\gamma_s;\theta) \, ds
    \right]\right\},
\end{align}
and the stochastic differential game at \eqref{eq:def:SDG} becomes
\begin{align}\label{eq:valuefunction:u}
    u(t,x) 
    = \inf_{H \in \mathcal{A}^H_T} \sup_{ \Gamma \in \mathcal{A}^\Gamma_T} \mathbf{E}_{t,x}^{\mathbb{P}^\Gamma} \left[ \theta \int_{t}^{T} g(s,X_s,h_s,\gamma_s;\theta) \, ds
    \right], 
\end{align}
where $u(t,x)$ is the optimal value function associated with the game\footnote{Solving the stochastic differential game is equivalent to solving the original control problem. Specifically, Lemma 1 in \citet{LleoRunggaldier_RSIMviaDuality_2026} shows that $u(t,x) = - \theta \sup_{H \in \mathcal{A}^H_T} J_T(t,x;H,\theta).$}. 

The stochastic differential game for $u(t,x)$ is a Linear-Quadratic-Gaussian (LQG) game with associated Bellman--Isaacs partial differential equation (PDE):
\begin{align}\label{eq:BellmanIsaacs:u}
    \frac{\partial u(s,x)}{\partial s}
    + \mathcal{H}\left(s,x,Du(s,x), D^2 u(s,x)\right)
    = 0
\end{align}
where $Du'(s,x) = \left( \frac{\partial u(s,y)}{\partial y_1}, \ldots, \frac{\partial u(s,y)}{\partial y_i}, \ldots, \frac{\partial u(s,y)}{\partial y_n}\right)\Big{|}_{y=x}$, $D^2u(s,x) = \left[ \frac{\partial^2 u(s,y)}{\partial y_i \partial y_j} \right] \Big{|}_{y=x}$ for $i,j=1,\ldots,n$, 
\begin{align}\label{eq:Hamiltonian:H+}
   \mathcal{H}(s,x, p, M) := \inf_{h \in \mathbb{R}^m} \sup_{\gamma \in \mathbb{R}^d} \left\{ \left[b_s + B_s x + \Lambda_s \gamma \right]' p 
    + \frac{1}{2} \tr \left(\Lambda_s\Lambda_s' M \right)
    +\theta g(s,x,h,\gamma;\theta) \right\}, 
\end{align}
with terminal condition $u(T,x) = 0$. \citet{LleoRunggaldier_RSIMviaDuality_2026} showed that the order in which the infimum and supremum are performed in the Hamiltonian does not affect the solution of the game.

\begin{theorem}[Finite-horizon duality-based solution]\label{theo:main_previouspaper}

The value function $u$ defined at \eqref{eq:valuefunction:u} is quadratic of the form 
    \begin{align}\label{eq:sol:u}
        u(t,x) = -\theta \left( \frac{1}{2} x' Q_t x + q_t'x + k_t \right),
    \end{align}
    where $Q: [t,T] \to \mathbb{R}^{n \times n}, q : [t,T] \to \mathbb{R}^n$, and $k: [t,T] \to \mathbb{R}$ are three functions such that:
    \begin{enumerate}[(i)]
    \item $Q_t$ is the unique symmetric non-negative definite solution to the matrix Riccati equation
    \begin{align}\label{eq:Q:Riccati}
        &\dot{Q}_s
            - \theta  Q_s'\Lambda_s \left( I_d- \frac{\theta}{\theta+1}\Sigma_s'(\Sigma_s\Sigma_s')^{-1} \Sigma_s\right) \Lambda_s'Q_s
            + \left(B_s' - \frac{\theta}{\theta+1} A_s'(\Sigma_s\Sigma_s')^{-1} \Sigma_s\Lambda_s' 
            \right)Q_s 
                                            \nonumber\\
        &+ Q_s'\left(B_s - \frac{\theta}{\theta+1} \Lambda_s\Sigma_s'(\Sigma_s\Sigma_s')^{-1}A_s \right)
            + \frac{1}{(\theta+1)} A_s' (\Sigma_s\Sigma_s')^{-1} A_s
        = 0, \quad Q_T=0;
    \end{align}
    \item $q_t$ solves the linear ODE
    \begin{align}\label{eq:q:ODE}
        & \dot{q}_s
        + \left(B_s'- \frac{\theta}{\theta+1} A_s'(\Sigma_s\Sigma_s')^{-1}\Sigma_s\Lambda_s' \right)q_s
        - \theta Q_s' \Lambda_s\left( I_d- \frac{\theta}{\theta+1} \Sigma_s'(\Sigma_s\Sigma_s')^{-1}\Sigma_s\right)\Lambda_s' q_s
                                        \nonumber\\
        &   + Q_s' \left[b_s - \frac{\theta}{\theta+1}  \Lambda_s\Sigma_s'(\Sigma_s\Sigma_s')^{-1}
    \left( a_s + \theta \Sigma_s\Xi_s \right) + \theta \Lambda_s\Xi_s\right]
                                        \nonumber\\
        &   - C_s'
        + \frac{1}{\theta+1} A_s'(\Sigma_s\Sigma_s')^{-1}
    \left( a_s + \theta \Sigma_s\Xi_s \right)
        = 0, \quad q_T =0;
    \end{align}
    \item $k_t$ is found by integration via
    \begin{align}\label{eq:k:integral}
        & k_t
        = \int_t^T \Bigg\{ 
        \frac{\theta}{2} q_s'\Lambda_s \left(I_d- \frac{\theta}{\theta+1} \Sigma_s'(\Sigma_s\Sigma_s')^{-1} \Sigma_s\right)\Lambda_s' q_s
                                        \nonumber\\
        &   - \left[b_s'+ \theta \Xi_s'\Lambda_s' - \frac{\theta}{\theta+1} \left( a_s + \theta \Sigma_s\Xi_s \right)'(\Sigma_s\Sigma_s')^{-1}\Sigma_s\Lambda_s' \right]q_s 
        - \frac{1}{2} \tr \left(\Lambda_s\Lambda_s' Q_s \right)
                                        \nonumber\\
        &   - \frac{1}{2(\theta+1)} \left(a_s + \theta \Sigma_s\Xi_s\right)'(\Sigma_s\Sigma_s')^{-1} \left(a_s + \theta \Sigma_s\Xi_s\right)
        + c_s
        + \frac{\theta-1}{2}\Xi_s'\Xi_s \Bigg\} \, ds, \quad k_T =0.
    \end{align}    
    \end{enumerate}  

Moreover, the pair of optimal strategies $(H^*, \Gamma^*)$ is defined by $H^* = \left(h^*(s,X_s) \right)_{s \in [t,T]} := \left(\hat{h}(s,X_s,Du(s,X_s)) \right)_{s \in [t,T]}$ and $\Gamma^* = \left(\gamma^*(s,X_s) \right)_{s \in [t,T]}  := \left(\hat{\gamma}(s,X_s,Du(s,X_s)) \right)_{s \in [t,T]}$, where the policies $\hat{h}(\cdot)$ and $\hat{\gamma}(\cdot)$ are given by
\begin{align}
        \hat{h}(s,x,Du(s,x))
            =&  \frac{1}{\theta+1}\left(\Sigma_s\Sigma_s'\right)^{-1}
        \left( a_s + A_s x
        + \theta \Sigma_s\Xi_s 
        + \Sigma_s\Lambda_s' Du(s,x)
        \right)
                            \label{eq:hhat}\\    
    \hat{\gamma}(s,x,Du(s,x)) 
=& \Lambda_s' Du(s,x) - \theta \left(\Sigma_s'\hat{h}(s,x,Du(s,x)) - \Xi_s \right)
                        \label{eq:gammahat:1}\\
    \phantom{\hat{\gamma}(s,x,Du(s,x))}
=&  \left[ I_d- \frac{\theta}{\theta+1} \Sigma_s'\left(\Sigma_s\Sigma_s'\right)^{-1}\Sigma_s \right]\Lambda_s' Du(s,x) 
        - \frac{\theta}{\theta+1} \Sigma_s'\left(\Sigma_s\Sigma_s'\right)^{-1}
        \left( a_s + A_s x \right) 
                                    \nonumber\\
        &+ \theta \left[ I_d- \frac{\theta}{\theta+1} \Sigma_s'\left(\Sigma_s\Sigma_s'\right)^{-1}\Sigma_s \right] \Xi_s,
                        \label{eq:gammahat:2}
    \end{align}   
or equivalently,
     \begin{align}
        \hat{h}(s,x,Du(s,x))
        =&  \left(\Sigma_s\Sigma_s'\right)^{-1}  
        (a_s + A_s x) + \left(\Sigma_s\Sigma_s'\right)^{-1}\Sigma_s \hat{\gamma}(s,x,Du(s,x))
                                                \label{eq:hhat:alt}\\
        \hat{\gamma}(s,x,Du(s,x)) 
=& \left[
            I_d+ \theta \Sigma_s' \left(\Sigma_s\Sigma_s'\right)^{-1}\Sigma_s
        \right]^{-1} 
        \left[ 
            - \theta \Sigma_s' \left(\Sigma_s\Sigma_s'\right)^{-1} (a_s + A_s x)
            + \theta \Xi_s
            + \Lambda_s'Du(s,x)
        \right]
                        \label{eq:gammahat:alt}
    \end{align}  

    Finally, the value function $u$ solves the stochastic differential game with the objective \eqref{eq:valuefunction:u}, in the sense that the saddle point condition
    \begin{align}\label{eq:verification:saddle}
            \mathbf{E}_{t,x}^{\mathbb{P}^\Gamma} \left[ \theta \int_{t}^{T} g(s,X_s,h^*_s,\gamma_s;\theta) \, ds \right]
            \leq 
            u(t,x) 
            \leq 
            \mathbf{E}_{t,x}^{\mathbb{P}^{\Gamma^*}} \left[ \theta \int_{t}^{T} g(s,X_s,h_s,\gamma^*_s;\theta) \, ds \right]
    \end{align}
    holds for any admissible strategies $H = \left(h_s\right)_{s\in[t,T]} \in \mathcal{A}^H_T$ and $\Gamma = \left(\gamma_s\right)_{s \in [t,T]} \in \mathcal{A}^\Gamma_T$, and that
    \begin{align}
            u(t,x) =\mathbf{E}_{t,x}^{\mathbb{P}^{\Gamma^*}} \left[ \theta \int_{t}^{T} g(s,X_s,h^*_s,\gamma^*_s;\theta) \, ds \right].
    \end{align} 
\end{theorem}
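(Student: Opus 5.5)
We follow the standard LQG route: solve the Hamiltonian in \eqref{eq:Hamiltonian:H+} pointwise, substitute a quadratic ansatz into the Bellman--Isaacs PDE \eqref{eq:BellmanIsaacs:u}, match coefficients, and close with a verification argument that yields the saddle-point inequalities \eqref{eq:verification:saddle}.

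\emph{Step 1: pointwise saddle point.} Fix $(s,x,p)$. The bracket in \eqref{eq:Hamiltonian:H+} is strictly convex in $h$, since $\theta>0$ and Assumption~\ref{as:sigma:posdef} give the term $\frac{\theta}{2}h'\Sigma_s\Sigma_s'h$. It is strictly concave in $\gamma$ because of $-\frac12\|\gamma\|^2$. The first-order conditions are
\[
\gamma=\Lambda_s'p-\theta(\Sigma_s'h-\Xi_s),\qquad \Sigma_s\Sigma_s'h=a_s+A_sx+\Sigma_s\gamma .
\]
The second condition is \eqref{eq:hhat:alt}. Substituting the first into the second gives $(1+\theta)\Sigma_s\Sigma_s'h=a_s+A_sx+\theta\Sigma_s\Xi_s+\Sigma_s\Lambda_s'p$, which is \eqref{eq:hhat}. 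Substituting back yields \eqref{eq:gammahat:2}, and solving the linear system for $\gamma$ gives \eqref{eq:gammahat:alt}. Since the bracket is convex--concave and quadratic, this critical point is a genuine saddle point, so the inf and sup may be interchanged, as already noted in the text. Plugging it back, $\mathcal H$ is explicit: it is quadratic in $p$ and $x$, with the projection-type matrix $I_d-\frac{\theta}{\theta+1}\Sigma_s'(\Sigma_s\Sigma_s')^{-1}\Sigma_s$ appearing in the $p'\Lambda_s(\cdot)\Lambda_s'p$ term.

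\emph{Step 2: ansatz and coefficient matching.} Insert $u=-\theta(\frac12x'Q_sx+q_s'x+k_s)$, so that $Du=-\theta(Q_sx+q_s)$ and $D^2u=-\theta Q_s$. Collecting the quadratic terms in $x$, the linear terms, and the constants gives, after dividing by $-\theta$:
\begin{itemize}
\item the Riccati equation \eqref{eq:Q:Riccati} from the quadratic terms;
\item the linear ODE \eqref{eq:q:ODE} from the linear terms;
\item an ODE $\dot k_s=-\{\cdots\}$ from the constants, which integrates to \eqref{eq:k:integral}.
\end{itemize}
The terminal condition $u(T,\cdot)=0$ gives zero terminal data for all three. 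The $q$ equation is linear with continuous coefficients by Assumption~\ref{as:coefficients}, so it has a unique solution once $Q$ exists. The bookkeeping here is routine but sign-sensitive; the factor $-\theta$ in $Du$ turns the $+p'\Lambda\Lambda'p$-type term into the $-\theta Q\Lambda(\cdots)\Lambda'Q$ term.

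\emph{Step 3: the main obstacle, global existence of $Q$ on $[t,T]$.} The quadratic term in \eqref{eq:Q:Riccati} carries the matrix $\Lambda_s(I_d-\frac{\theta}{\theta+1}\Sigma_s'(\Sigma_s\Sigma_s')^{-1}\Sigma_s)\Lambda_s'$. This matrix is positive semidefinite, because $\Sigma_s'(\Sigma_s\Sigma_s')^{-1}\Sigma_s$ is an orthogonal projection and $\frac{\theta}{\theta+1}<1$.

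Reversing time, $\tau=T-s$, the equation becomes
\[
\frac{dQ}{d\tau}=\tilde B'Q+Q\tilde B-\theta Q\Lambda(\cdots)\Lambda'Q+\frac{1}{\theta+1}A'(\Sigma\Sigma')^{-1}A,
\]
with $Q(0)=0$. This is a standard LQ control Riccati equation: nonnegative forcing term and a negative quadratic term. I would invoke the classical result (e.g.\ Wonham) that such equations have a unique global symmetric nonnegative definite solution.

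To justify it concretely, I would exhibit $\frac12x'Q_sx$ as the value of an auxiliary deterministic LQ \emph{minimization} problem. This gives $Q\geq0$ together with an a priori upper bound (take the zero control), which rules out blow-up. Local existence and uniqueness follow from the local Lipschitz property.

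\emph{Step 4: verification.} Let $u$ be the resulting $C^{1,2}$ function. It satisfies \eqref{eq:BellmanIsaacs:u} with the pointwise saddle property
\[
\theta g(h^*,\gamma)+\mathcal L^\gamma u\le 0\le \theta g(h,\gamma^*)+\mathcal L^{\gamma^*}u,
\]
where $\mathcal L^\gamma$ includes $\partial_s$. Apply It\^o's formula to $u(s,X_s)$ under $\mathbb P^\Gamma$ with the dynamics \eqref{eq:state:Pgamma:FO}. Localize with stopping times, take expectations, and pass to the limit.

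The limit requires uniform integrability. Quadratic growth of $u$, together with the Gaussian or affine structure of $X$ under the affine feedbacks $h^*,\gamma^*$, gives the needed moment bounds. For general admissible $H,\Gamma$ I would use the martingale condition in Definition~\ref{def:classAgammaT:fullobs}; failing that, I would restrict to the case $\mathbf E^{\mathbb P^\Gamma}$-finite and otherwise use Fatou.

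This gives both inequalities of \eqref{eq:verification:saddle}, and equality for the pair $(H^*,\Gamma^*)$. Finally, $\chi^{\Gamma^*}$ is a true martingale by Beneš-type linear-growth criteria, since $\gamma^*$ is affine in $x$.
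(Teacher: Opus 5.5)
The paper does not prove this theorem; it imports it from the earlier Lleo--Runggaldier duality paper. Your route is the standard one: a pointwise saddle point of the convex--concave Hamiltonian, a quadratic ansatz, global existence of $Q$ via an auxiliary LQ minimization problem, and an It\^o verification. It follows the same pattern as the paper's own proof of the ergodic analogue in Appendix~\ref{app:proof:theo:ergodic}. Steps~1 and~3 are sound, and the $Q$- and $q$-equations do come out of the coefficient matching as you claim. Two points need repair.

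\textbf{The sign of $k$.} Your constant-term matching in Step~2 does not produce \eqref{eq:k:integral} as printed. Collect the $x$-free terms of $\partial_s u+\mathcal H=0$ and solve for $\dot k_s$. You get $\dot k_s=+\{\cdots\}_s$, where $\{\cdots\}_s$ is exactly the integrand displayed in \eqref{eq:k:integral}. With $k_T=0$ this means $k_t=-\int_t^T\{\cdots\}_s\,ds$. So your assertion $\dot k_s=-\{\cdots\}$ has the wrong sign, and the displayed formula itself carries a sign slip.

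A quick check confirms this. Take $A\equiv0$, $C\equiv0$, $\Xi\equiv0$, $c\equiv0$ and $a\not\equiv0$. Then $Q\equiv0$ and $q\equiv0$. By the footnote relation $u=-\theta\sup_H J_T$, you need $k_t=\sup_H J_T\ge J_T(t,x;0,\theta)=0$. Yet \eqref{eq:k:integral} returns $k_t=-\frac{1}{2(\theta+1)}\int_t^T a_s'(\Sigma_s\Sigma_s')^{-1}a_s\,ds<0$. The corrected sign is also the one consistent with $\bar k$ in Proposition~\ref{prop:Qqk:asymptotic} and with $\rho=-\theta\bar k$.

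\textbf{Integrability in Step~4.} Condition (iii) of Definition~\ref{def:classAgammaT:fullobs} does not settle this. It only makes $\mathbb P^\Gamma$ a probability measure and gives no moment bound on $X$ under $\mathbb P^\Gamma$. For the left inequality in \eqref{eq:verification:saddle} you need $\limsup_n\mathbf E^{\mathbb P^\Gamma}[-u(\tau_n,X_{\tau_n})]\le 0$ for a term of quadratic growth, and Fatou's lemma points the wrong way there.

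The clean route uses the exact completion-of-squares identities below, where $\mathcal L^\gamma$ is the spatial generator under $\mathbb P^\Gamma$:
\begin{align*}
\partial_s u+\mathcal L^{\gamma}u+\theta g(s,x,h^*(s,x),\gamma;\theta)&=-\tfrac12\|\gamma-\gamma^*(s,x)\|^2,\\
\partial_s u+\mathcal L^{\gamma^*}u+\theta g(s,x,h,\gamma^*(s,x);\theta)&=\tfrac{\theta}{2}(h-h^*(s,x))'\Sigma_s\Sigma_s'(h-h^*(s,x)).
\end{align*}
The second identity settles the right inequality outright, because under $\mathbb P^{\Gamma^*}$ the state is Gaussian and does not depend on $H$. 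For the left inequality, first treat adversaries with finite relative entropy, $\mathbf E^{\mathbb P^\Gamma}\int_t^T\|\gamma_s\|^2ds<\infty$. Gronwall on the linear equation \eqref{eq:state:Pgamma:FO} then gives $\mathbf E^{\mathbb P^\Gamma}\sup_s|X_s|^2<\infty$, and dominated convergence closes the localization.

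Adversaries with infinite entropy need a separate argument or must be excluded from $\mathcal A^\Gamma_T$; your sketch does not cover them. Note that the ergodic class $\bar{\mathcal A}^\Gamma$ builds this kind of moment control into its definition, and the appendix proof relies on it, whereas $\mathcal A^\Gamma_T$ does not.
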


In summary, the results in Section \ref{sec:mainresult} provide the finite-horizon duality foundation for the learning methodology developed below. The free energy--entropy duality identifies the risk-sensitive control problem with an LQG stochastic differential game; the quadratic form of the value function motivates the critic parameterization; and the affine form of the saddle-point policies motivates the actor parameterization. The reinforcement-learning approach developed in Section~\ref{sec:PolicyGradientMethods} preserves this structure while replacing the closed-form solution of the known-coefficient model by data-driven learning procedures.

\subsection{The Kelly Criterion}
\label{sec:Kelly}

The Kelly criterion corresponds to the risk-neutral limiting case of the risk-sensitive criterion as $\theta \to 0$ and also coincides with the logarithmic utility. It is of interest as both a limiting case of the duality-based solution above and as an interpretable reference policy for the reinforcement-learning approach that will be developed in Section~\ref{sec:PolicyGradientMethods}. For $R_t=0$, the Kelly criterion is
\begin{align}\label{eq:J:Kelly}
J^K_T(t,x;H)
:=
\mathbf E_{t,x}\left[R_T\right]
=
\mathbf E_{t,x}
\left[
\int_t^T
\left\{
-\frac12 h_s'\Sigma_s\Sigma_s'h_s
+h_s'a_s
+\frac12\Xi_s'\Xi_s
-c_s
+
\left(h_s'A_s-C_s\right)X_s
\right\}
\, ds
\right],
\end{align}
where the equality follows from \eqref{eq:excess_return} and from the martingale property of the stochastic integral
$\int_t^T
\left(h_s'\Sigma_s-\Xi_s'\right)dW_s$. Define the Kelly value function by
\begin{align}\label{eq:valuefunction:u:Kelly}
\Phi^K(t,x)
:=
\sup_{H\in\mathcal{A}^H_T}
J^K_T(t,x;H)
= 
\sup_{H\in\mathcal{A}^H_T}
\mathbf E_{t,x}
\left[
\int_t^T
g^K(s,X_s,h_s) \, ds
\right],
\end{align}
where
\begin{align}
g^K(s,x,h)
:=
-\frac12 h'\Sigma_s\Sigma_s'h
+h'a_s
+\frac12\Xi_s'\Xi_s
-c_s
+
\left(h'A_s-C_s\right)x.
\label{eq:g:Kelly}
\end{align}
The state process evolves under $\mathbb P$ according to \eqref{eq:state}. Hence, $\Phi^K$ solves the HJB equation
\begin{align}
\frac{\partial \Phi^K(s,x)}{\partial s}
+ \left(b_s+B_sx\right)'D\Phi^K(s,x)
+ \frac{1}{2}
\tr\left(\Lambda_s\Lambda_s'D^2\Phi^K(s,x)\right)
+ \sup_{h\in\mathbb R^m} g^K(s,x,h)
=
0,
\label{eq:HJB:Kelly}
\end{align}
with terminal condition $\Phi^K(T,x)=0$.

Because the state process is uncontrolled under $\mathbb P$, the optimal control is obtained by pointwise maximization of $g^K$. Assumption~\ref{as:sigma:posdef} gives
\begin{align}
h^K(s,x)
= 
\left(\Sigma_s\Sigma_s'\right)^{-1}
\left(a_s+A_sx\right),
\qquad
(s,x)\in[t,T)\times\mathbb R^n.
\label{eq:hhat:Kelly}
\end{align}

The value function can be shown to be quadratic, i.e. 
$
\Phi^K(t,x)
= 
\frac12 x'Q_t^Kx + \left(q_t^K\right)'x+k_t^K,
$
with coefficients obtained by substituting \eqref{eq:hhat:Kelly} into the HJB equation. This Kelly solution provides a $\theta \to 0$ reference point against which the risk-sensitive policies and the learning-based approximations can be compared.

\section{Extension to Infinite-Horizon Problems}
\label{sec:RSBAM:infinite}

Section~\ref{sec:RSBAM} provides the finite-horizon duality foundation for reinforcement-learning algorithms formulated as \emph{episodic} tasks. To support \emph{continuing} tasks, this section extends the duality-based solution to an infinite-horizon risk-sensitive control problem. Under constant coefficients and stabilizing assumptions, the finite-horizon Riccati system admits limiting coefficients, which yield a relative value function, an ergodic Bellman--Isaacs equation, and stationary saddle-point policies. The section adapts the asymptotic results of \citet{dall_RSBench}, originally derived under a Kuroda--Nagai change of measure, to the limiting stochastic differential games produced by the free energy--entropy duality. This transfer is justified by the equivalence between the two approaches established in \citet[Section~3]{LleoRunggaldier_RSIMviaDuality_2026}. The section also connects the ergodic game value to the long-run risk-sensitive criterion and identifies the ergodic Kelly investment policy as the risk-neutral limiting reference case.

\begin{assumption}\label{as:coefficients:ergodic}
    The coefficient functions in \eqref{eq:dS}--\eqref{eq:dL} are constant.    
\end{assumption}

\begin{definition}[Admissible (Target) Policies]\label{def:class:barA}
\;
\begin{enumerate}[(i)]
    \item $H = \left(h_t\right)_{t \geq 0}$ with values in $\mathbb{R}^{m}$ is in class $\bar{\mathcal{A}}^{H}$ if $H\big|_{[0,T]} \in \mathcal{A}^H_T$ for every $T >0$.

    \item $\Gamma = \left(\gamma_t\right)_{t \geq 0}$ with values in $\mathbb{R}^{d}$ is in class $\bar{\mathcal{A}}^{\Gamma}$ if $\Gamma\big|_{[0,T]} \in \mathcal{A}^\Gamma_T$ for every $T >0$, and the mean $m_t$ and covariance matrix $P_t$ of the state process $X_t$ under $\mathbb{P}^\Gamma$ admit finite, nondegenerate limits as $t\to\infty$.
\end{enumerate}

Such strategies are called admissible for the infinite horizon problem.
\end{definition}

The investor seeks to maximize the risk-sensitive log excess return per unit time:
\begin{equation}\label{eq:J:infinite}
J_\infty(H,\theta)
:=
\liminf_{T\to\infty}
-\frac{1}{\theta T}
\ln I_T(H,\theta).
\end{equation}
where the criterion $I_T$ is defined at \eqref{eq:I}. The ergodic criterion $J_\infty$ at \eqref{eq:J:infinite} is the long-run counterpart of the finite horizon criterion $J_T$ at \eqref{eq:J}. Equivalently, since $\theta>0$,
$
J_\infty(H,\theta)
=
-\frac1\theta
\limsup_{T\to\infty}
\frac{1}{T}
\ln I_T(H,\theta)
$.
We therefore define the exponentially transformed ergodic criterion by
\begin{align}\label{eq:I:infinite}
I_\infty(H,\theta)
&:=
e^{-\theta J_\infty(H,\theta)}
=
\exp\left\{
\limsup_{T\to\infty}
\frac{1}{T}
\ln I_T(H,\theta)
\right\}
=
\limsup_{T\to\infty}
\left(I_T(H,\theta)\right)^{1/T}.
\end{align}
Thus, for $\theta>0$, maximizing $J_\infty(H,\theta)$ is equivalent to minimizing $I_\infty(H,\theta)$.

Motivated by the finite-horizon dual representation, we consider the ergodic stochastic differential game
\begin{align}\label{eq:SDG:ergodic}
    \bar u(x) := \inf_{H \in \bar{\mathcal{A}}^H} \sup_{ \Gamma \in \bar{\mathcal{A}}^\Gamma} \limsup_{T \to \infty} \frac{1}{T}\mathbf{E}_x^{\mathbb{P}^\Gamma} \left[ \theta \int_{0}^{T} g(X_t,h_t,\gamma_t;\theta) \, dt
    \right], 
\end{align}
where $\bar u$ is the relative value function. We solve this game in the next two subsections, and then show that this game corresponds to an ergodic extension of the duality-induced relation \eqref{eq:EEDuality:inf:SDG}.

\subsection{Asymptotic Behavior of the Value Function Coefficients}

For notational convenience, let
\begin{align}
    \mathcal{P}^{-} :=& I_d - \frac{\theta}{\theta+1}\Sigma'\left(\Sigma\Sigma'\right)^{-1}\Sigma
                        \label{eq:notation:calPmin}\\
    K_0 :=& \theta \Lambda \mathcal{P}^{-} \Lambda'
                            \label{eq:notation:K0}\\  
    K_1 :=& B - \frac{\theta}{\theta+1} \Lambda\Sigma'(\Sigma\Sigma')^{-1}A
                            \label{eq:notation:K1}
\end{align}
The Riccati equation~\eqref{eq:Q:Riccati} becomes
\begin{align}\label{eq:Q:Riccati:v2}
    \dot{Q}_t 
    - Q_t K_0 Q_t
    + K_1'Q_t 
    + Q_t K_1 
    + \frac{1}{\theta+1} A' (\Sigma\Sigma')^{-1} A
    = 0,    \quad Q_T = 0,
\end{align}
with an analogous reformulation for the linear ODE \eqref{eq:q:ODE} and integral \eqref{eq:k:integral}.

Because the Riccati system coincides with that obtained under the Kuroda--Nagai change of measure, the asymptotic results of \citet{dall_RSBench} apply directly. 

\begin{proposition}[Proposition 1 in \citep{dall_RSBench}]\label{prop:Qqk:asymptotic}
\;
\begin{enumerate}[(i).]
    \item If 
    \begin{align}\label{eq:stabcond:K1}
        K_1 \text{\, defined at \,} \eqref{eq:notation:K1} \text{\, is stable},
    \end{align}
    then the finite-horizon Riccati coefficient $Q$ converges, as the horizon $T\to+\infty$, to a nonnegative definite matrix $\bar{Q}$, which is a solution of the algebraic Riccati equation:
    \begin{align}\label{eq:Q:Riccati:algebraic}
        - \bar{Q}K_0\bar{Q} 
        + K_1'\bar{Q} +\bar{Q}K_1
        + \frac{1}{\theta+1} A' (\Sigma\Sigma')^{-1} A
        = 0
    \end{align}
    Moreover, $\bar{Q}$ satisfies the estimate
    \begin{align}\label{eq:Qbar:Riccati:Infhorizon}
        0 \leq \bar{Q} \leq
        \frac{1}{\theta}\int_{0}^{+\infty}
        e^{s K_1'}A'(\Sigma\Sigma')^{-1}Ae^{s K_1} \, ds
    \end{align}
    \item In addition, $q$ converges as $T \to +\infty$ to a constant vector $\bar{q}$, which satisfies
    \begin{align}\label{eq:qbar:ODE:Infhorizon}
        \left( K_1' - \bar{Q}K_0 \right)\bar{q}
        + \bar{Q}b + \theta \bar{Q}\Lambda\Xi - C'
        +\frac{1}{\theta+1} \left(A'-\theta \bar{Q}\Lambda\Sigma'\right)(\Sigma\Sigma')^{-1}\left(a +\theta\Sigma\Xi \right) = 0,
    \end{align}
    and $\frac{k}{T}$ converges to a constant $\bar{k}$ defined by
    \begin{align}\label{eq:kbar:integral:Infhorizon}
        \bar{k} 
        =& 
        \frac{1}{2} \textrm{tr} \left( \Lambda \Lambda' \bar{Q} \right)
        - \frac{\theta}{2}\bar{q}'\Lambda\Lambda'\bar{q} 
        + b'\bar{q}
        +\frac{1}{2}\frac{1}{\theta+1} a'(\Sigma\Sigma')^{-1}a
        +\frac{1}{2}\frac{\theta^2}{\theta+1}\bar{q}'\Lambda\Sigma'(\Sigma\Sigma')^{-1}\Sigma\Lambda' \bar{q}
                                        \nonumber\\
        &-\frac{\theta}{\theta+1}\bar{q}'\Lambda\Sigma'(\Sigma\Sigma')^{-1}a
        -\frac{\theta^2}{\theta+1} \bar{q}'\Lambda\Sigma'(\Sigma\Sigma')^{-1}\Sigma\Xi
        +\theta\Xi'\Lambda'\bar{q}
        -\frac{1}{2}\left(\theta-1\right)\Xi'\Xi
                                        \nonumber\\
        &+\frac{\theta}{\theta+1} a'(\Sigma\Sigma')^{-1}\Sigma\Xi
        +\frac{1}{2}\frac{\theta^2}{\theta+1}\Xi'\Sigma'(\Sigma\Sigma')^{-1}\Sigma\Xi
        -c
    \end{align}
    \item If, in addition to condition~\eqref{eq:stabcond:K1} we assume that
    \begin{equation}\label{eq:ctrlcond:BA}
        (B',A'(\Sigma\Sigma')^{-1}\Sigma)
        \textrm{\, is controllable}
    \end{equation}
    then the solution $\bar{Q}$ of \eqref{eq:Q:Riccati:algebraic} is strictly positive definite. 
\end{enumerate}
\end{proposition}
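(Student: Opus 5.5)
We plan to work with the time-to-go variable $\tau := T-t$, so that the terminal-value Riccati equation \eqref{eq:Q:Riccati:v2} becomes the forward initial-value problem
\begin{align*}
\frac{d}{d\tau}\tilde Q_\tau = -\tilde Q_\tau K_0 \tilde Q_\tau + K_1'\tilde Q_\tau + \tilde Q_\tau K_1 + \frac{1}{\theta+1}A'(\Sigma\Sigma')^{-1}A, \qquad \tilde Q_0 = 0,
\end{align*}
with constant coefficients by Assumption~\ref{as:coefficients:ergodic}. Convergence of $Q$ as $T\to\infty$ is then convergence of $\tilde Q_\tau$ as $\tau\to\infty$.

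\textbf{Step 1: monotonicity.} We show that $\tau\mapsto\tilde Q_\tau$ is nondecreasing in the Loewner order. The derivative $D_\tau:=\frac{d}{d\tau}\tilde Q_\tau$ satisfies the linear equation
\begin{align*}
\dot D = (K_1-K_0\tilde Q)'D + D(K_1-K_0\tilde Q), \qquad D_0 = \tfrac{1}{\theta+1}A'(\Sigma\Sigma')^{-1}A\ge 0.
\end{align*}
Hence $D_\tau = \Phi_\tau' D_0 \Phi_\tau \ge 0$, where $\Phi$ is the fundamental matrix of $K_1-K_0\tilde Q$. Nonnegativity of $\tilde Q$ follows from Theorem~\ref{theo:main_previouspaper}(i).

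\textbf{Step 2: uniform upper bound.} This is the main obstacle, because $K_0$ need not be nonnegative definite. We have $\mathcal P^-\ge \frac{1}{\theta+1}I_d$, since $\Sigma'(\Sigma\Sigma')^{-1}\Sigma$ is an orthogonal projection; for $\theta>0$ this gives $K_0\ge0$. Then $-\tilde QK_0\tilde Q\le 0$, and comparison with the Lyapunov equation $\dot Y = K_1'Y+YK_1+\frac{1}{\theta+1}A'(\Sigma\Sigma')^{-1}A$, $Y_0=0$, gives
\begin{align*}
\tilde Q_\tau \le Y_\tau = \frac{1}{\theta+1}\int_0^\tau e^{sK_1'}A'(\Sigma\Sigma')^{-1}Ae^{sK_1}\,ds.
\end{align*}
The difference $Y-\tilde Q$ solves a linear equation with nonnegative forcing $\tilde QK_0\tilde Q$, so it stays nonnegative by the same variation-of-constants argument. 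Stability of $K_1$ in \eqref{eq:stabcond:K1} makes $Y_\tau$ converge, which yields a bound of the form \eqref{eq:Qbar:Riccati:Infhorizon}; the constant there is looser than $\frac{1}{\theta+1}$, so it is implied.

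Monotone bounded symmetric matrices converge, so $\tilde Q_\tau\to\bar Q$. Then $\dot{\tilde Q}_\tau$ converges to the right-hand side evaluated at $\bar Q$, and this limit must be $0$, which is \eqref{eq:Q:Riccati:algebraic}.

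\textbf{Step 3: part (ii).} We write the $q$-equation in time-to-go form: $\dot{\tilde q} = (K_1'-\tilde QK_0)\tilde q + f(\tilde Q)$, with $f$ affine in $\tilde Q$ and read off from \eqref{eq:q:ODE}. This requires showing that $K_1-K_0\bar Q$ is stable. That is the standard fact that the closed-loop matrix of the stabilizing (maximal) Riccati solution is stable; we would obtain it from the Lyapunov identity $(K_1-K_0\bar Q)'\bar Q+\bar Q(K_1-K_0\bar Q) = -\bar QK_0\bar Q - \frac{1}{\theta+1}A'(\Sigma\Sigma')^{-1}A$, combined with stability of $K_1$. Exponential stability of the limiting system, together with the convergent coefficients, gives $\tilde q_\tau\to\bar q$ solving \eqref{eq:qbar:ODE:Infhorizon}. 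Finally, $k_T/T$ is the Cesàro average of the integrand in \eqref{eq:k:integral}. That integrand converges, since $Q$ and $q$ converge, so its average converges to the integrand evaluated at $(\bar Q,\bar q)$; expanding this gives \eqref{eq:kbar:integral:Infhorizon}, possibly up to sign conventions that we would reconcile.

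\textbf{Step 4: part (iii).} Suppose $\bar Qv=0$. Multiplying \eqref{eq:Q:Riccati:algebraic} by $v'$ on the left and $v$ on the right gives $(\Sigma\Sigma')^{-1/2}Av=0$. Hence $Av=0$, and then $\bar QK_1v=0$. So $\ker\bar Q$ is $K_1$-invariant and contained in $\ker A$, and on it $K_1$ acts as $B$. An invariant subspace of $B$ lying in $\ker(\Sigma'(\Sigma\Sigma')^{-1}A)$ contradicts the controllability condition \eqref{eq:ctrlcond:BA}, read via the Hautus test for the dual pair. Therefore $\ker\bar Q=\{0\}$.
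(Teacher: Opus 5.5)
Your proof is correct, but it takes a genuinely different route from the paper. The paper does not prove the statement itself. It observes that the duality produces the same Riccati system as the Kuroda--Nagai change of measure and imports Proposition~1 of Davis and Lleo wholesale; in the Appendix it again cites Theorem~4.2 of Kuroda and Nagai for the stability of $K_1-K_0\bar Q$. You instead derive everything directly from three structural facts: $K_0\ge 0$ (via $\mathcal P^-\ge\frac{1}{\theta+1}I_d$ for $\theta>0$), $A'(\Sigma\Sigma')^{-1}A\ge 0$, and stability of $K_1$. The tools are monotonicity of the Riccati flow, comparison with a Lyapunov equation, Ces\`aro averaging, and a kernel-invariance/Hautus argument. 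Your route gives a self-contained proof and a sharper bound, with constant $\frac{1}{\theta+1}$ instead of the $\frac1\theta$ in \eqref{eq:Qbar:Riccati:Infhorizon}. It also proves independently the stability of $K_2=K_1-K_0\bar Q$ that the Appendix relies on. The citation buys brevity and ties the result to the earlier Kuroda--Nagai framework.

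Three places need tightening.

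In Step~1, the representation $D_\tau=\Phi_\tau' D_0\Phi_\tau$ requires $\dot\Phi_\tau=\Phi_\tau(K_1-K_0\tilde Q_\tau)$ with $\Phi_0=I$. With the usual left-multiplied fundamental matrix the identity holds only for constant coefficients. The conclusion $D_\tau\ge 0$ is unaffected.

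In Step~3, the Lyapunov identity alone does not give stability, since its right-hand side is only semidefinite and $\bar Q$ may be singular when controllability is not assumed. Combine it with stability of $K_1$ through eigenvectors. Suppose $K_2v=\lambda v$ with $\mathrm{Re}\,\lambda\ge 0$. Then $2\,\mathrm{Re}(\lambda)\,v^*\bar Qv=-v^*\bigl(\bar QK_0\bar Q+\frac{1}{\theta+1}A'(\Sigma\Sigma')^{-1}A\bigr)v$, where the left side is $\ge 0$ and the right side is $\le 0$. Both must vanish, so $K_0\bar Qv=0$ and hence $K_1v=\lambda v$, contradicting stability of $K_1$.

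Your sign hedge for $k$ is warranted. Read literally, \eqref{eq:k:integral} gives $k_t/T\to-\bar k$. Matching constant terms in the Bellman--Isaacs equation shows that $\dot k_s$ equals the integrand in \eqref{eq:k:integral} with $k_T=0$, so in fact $k_t=-\int_t^T(\cdots)\,ds$. With this sign $k_t/T\to\bar k$, consistent with $\rho=-\theta\bar k$ in \eqref{eq:rho}.
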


\subsection{Ergodic Bellman--Isaacs Equation}

The ergodic Bellman--Isaacs PDE is
\begin{align}\label{eq:BellmanIsaacs:ergodic}
    \mathcal{H}\left(x,D\bar u(x), D^2 \bar u(x)\right) = \rho, 
\end{align}
where 
\begin{align}\label{eq:Hamiltonian:Hpm}
   \mathcal{H}\left(x,D\bar u(x), D^2 \bar u(x)\right)
   :=& \inf_{h \in \mathbb{R}^m} \sup_{\gamma \in \mathbb{R}^d} \left\{ \mathcal{L}^{h,\gamma}\bar u(x) +\theta g(x, h,\gamma;\theta) \right\}
   = \sup_{\gamma \in \mathbb{R}^d} \inf_{h \in \mathbb{R}^m} \left\{ \mathcal{L}^{h,\gamma}\bar u(x) +\theta g(x,h,\gamma;\theta) \right\},
\end{align}
the generator $\mathcal{L}^{h,\gamma}\bar u$ of $\bar u(X_t)$, assuming $\bar u \in C^2(\mathbb{R}^n)$, is
\begin{align}\label{eq:operator:call}
    \mathcal{L}^{h,\gamma}\bar u(x) 
   :=& \left[b + B x + \Lambda \gamma \right]' D\bar u(x)
    + \frac{1}{2} \tr \left(\Lambda\Lambda' D^2\bar u(x) \right).
\end{align}
The constant $\rho$ is the optimal long-run value of the dual stochastic differential game. Equivalently, when the limiting duality relation holds, the optimal long-run risk-sensitive criterion is $-\rho/\theta$.

The ergodic Bellman--Isaacs PDE at \eqref{eq:BellmanIsaacs:ergodic} has a solution given by the pair
\begin{align}
    \bar u(x) =& -\theta \left( \frac{1}{2} x' \bar{Q} x + \bar{q}'x \right),
                    \label{eq:baru}\\
    \rho =& -\theta \bar{k},
                    \label{eq:rho}
\end{align}
where $\bar{Q}, \bar{q}$, and $\bar{k}$ are given in Proposition \ref{prop:Qqk:asymptotic}.

\begin{theorem}\label{theo:ergodic}

Let $\bar u$ be the relative value function at \eqref{eq:baru}, and let $\rho$ be the constant defined at \eqref{eq:rho}. Define the stationary strategies $H^* = \left(h^*(X_t) \right)_{t \geq 0}$ and $\Gamma^* = \left(\gamma^*(X_t) \right)_{t \geq 0}$, by $h^*(x) = \bar{h}(x,D\bar u(x))$ and $\gamma^*(x) = \bar{\gamma}(x,D\bar u(x))$, where $\bar{h}(\cdot)$ and $\bar{\gamma}(\cdot)$ are given by
\begin{align}
        \bar{h}(x,D\bar u(x))
            =&  \frac{1}{\theta+1}\left(\Sigma\Sigma'\right)^{-1}
        \left( a + A x
        + \theta \Sigma \Xi 
        + \Sigma\Lambda' D\bar u(x)
        \right)
                            \label{eq:hbar}\\    
    \bar{\gamma}(x,D\bar u(x)) 
=& \Lambda' D\bar u(x) - \theta \left(\Sigma'\bar{h}(x,D\bar u(x)) - \Xi \right)
                        \label{eq:gammabar:1}\\
    \phantom{\bar{\gamma}(x,D\bar u(x))}
=&  \left[ I_d- \frac{\theta}{\theta+1} \Sigma'\left(\Sigma\Sigma'\right)^{-1}\Sigma \right]\Lambda' D\bar u(x) 
        - \frac{\theta}{\theta+1} \Sigma'\left(\Sigma\Sigma'\right)^{-1}
        \left( a + A x \right) 
                                    \nonumber\\
        &+ \theta \left[ I_d- \frac{\theta}{\theta+1} \Sigma'\left(\Sigma\Sigma'\right)^{-1}\Sigma \right] \Xi,
                        \label{eq:gammabar:2}
    \end{align}   
    or equivalently
     \begin{align}
        \bar{h}(x,D\bar u(x))
        =&  \left(\Sigma\Sigma'\right)^{-1}  
        (a + A x) + \left(\Sigma\Sigma'\right)^{-1}\Sigma \bar{\gamma}(x,D\bar u(x))
                             \label{eq:hbar:alt}\\
        \bar{\gamma}(x,D\bar u(x)) 
=& \left[
            I_d+ \theta \Sigma' \left(\Sigma\Sigma'\right)^{-1}\Sigma
        \right]^{-1} 
        \left[ 
            - \theta \Sigma' \left(\Sigma\Sigma'\right)^{-1} (a + A x)
            + \theta \Xi
            + \Lambda'D\bar u(x)
        \right].
                        \label{eq:gammabar:alt}
\end{align}
If the assumptions in Proposition \ref{prop:Qqk:asymptotic}, namely \eqref{eq:stabcond:K1} and \eqref{eq:ctrlcond:BA}, hold, and we have both
\begin{align}
    & (B',\Lambda)
    \quad \textrm{is controllable}
            \label{eq:ctrlcond:BA:theorem}\\
    \theta & \bar{Q}\Lambda\Sigma'(\Sigma\Sigma')^{-1}\Sigma\Lambda'\bar{Q}
    < A'(\Sigma\Sigma')^{-1} A,
            \label{eq:theo:ergodic:mainassumption}
\end{align}
then,
\begin{enumerate}[1.]
    \item The pair of strategies $(H^*,\Gamma^*)$ is admissible, with $H^*\in\bar{\mathcal A}^H$ and $\Gamma^*\in\bar{\mathcal A}^\Gamma$. Moreover, their defining policies $h^*(\cdot)$ and $\gamma^*(\cdot)$ are Borel-measurable maps.

       \item The saddle point inequalities
        \begin{align}\label{eq:verification:saddle:ergodic}
            \limsup_{T \to \infty} \frac{1}{T} \mathbf{E}_{x}^{\mathbb{P}^\Gamma} \left[ \theta \int_{0}^{T} g(X_t,h^*_t,\gamma_t;\theta) \, dt \right]
            \leq 
            \rho 
            \leq 
            \limsup_{T \to \infty} \frac{1}{T}
            \mathbf{E}_{x}^{\mathbb{P}^{\Gamma^*}} \left[ \theta \int_{0}^{T} g(X_t,h_t,\gamma^*_t;\theta) \, dt \right]
        \end{align}
        holds for any admissible strategies $H = \left(h_t\right)_{t \geq 0} \in \bar{\mathcal{A}}^H$ and $\Gamma = \left(\gamma_t\right)_{t \geq 0} \in \bar{\mathcal{A}}^\Gamma$, and
        \begin{align}
            u(x) = \limsup_{T \to \infty} \frac{1}{T} \mathbf{E}_{x}^{\mathbb{P}^{\Gamma^*}} \left[ \theta \int_{0}^{T} g(X_t,h^*_t,\gamma^*_t;\theta) \, dt \right] = \rho.
        \end{align} 
    
    \item The strategies $H^*$ and $\Gamma^* $ are optimal for the ergodic stochastic differential game \eqref{eq:SDG:ergodic}.

\end{enumerate}
\end{theorem}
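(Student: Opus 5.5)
The proof is a verification argument for the ergodic Bellman--Isaacs equation \eqref{eq:BellmanIsaacs:ergodic}. It runs in parallel with the finite-horizon saddle-point verification of Theorem~\ref{theo:main_previouspaper}, with the stationary data $(\bar Q,\bar q,\bar k)$ of Proposition~\ref{prop:Qqk:asymptotic} in place of $(Q_t,q_t,k_t)$.

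\textbf{Step 1: the pair $(\bar u,\rho)$ solves \eqref{eq:BellmanIsaacs:ergodic}.} I substitute $D\bar u(x)=-\theta(\bar Qx+\bar q)$ and $D^2\bar u=-\theta\bar Q$ into the bracket of \eqref{eq:Hamiltonian:Hpm}. This bracket is strictly convex in $h$, since $\Sigma\Sigma'>0$ by Assumption~\ref{as:sigma:posdef}, and strictly concave in $\gamma$, through the term $-\tfrac12\|\gamma\|^2$. Hence the first-order conditions give the unique saddle point $(\bar h,\bar\gamma)$ in \eqref{eq:hbar}--\eqref{eq:gammabar:alt}, and the infimum and supremum commute. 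Plugging the saddle point back in, the quadratic, linear and constant terms in $x$ vanish or reduce to $\rho=-\theta\bar k$ exactly by \eqref{eq:Q:Riccati:algebraic}, \eqref{eq:qbar:ODE:Infhorizon} and \eqref{eq:kbar:integral:Infhorizon}. This is the algebra of the finite-horizon case with $\dot Q=\dot q=0$ and $\dot k$ replaced by $\bar k$.

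\textbf{Step 2: admissibility (item 1).} Both $h^*$ and $\gamma^*$ are affine in $x$, hence continuous and Borel. Square integrability on each $[0,T]$ follows because $X$ has continuous paths. For $\Gamma^*$, the Radon--Nikodym density $\chi^{\Gamma^*}_{[0,T]}$ is a true martingale, by the standard argument for affine drifts of a Gaussian linear SDE (Bene\v{s}' criterion, linear growth). The remaining condition is convergence of the moments of $X$ under $\mathbb P^{\Gamma^*}$. Under this measure $X$ is an Ornstein--Uhlenbeck process with drift matrix $M:=B+\Lambda\,\partial_x\gamma^*$. Using \eqref{eq:gammabar:2},
\[
M = K_1 - K_0\bar Q .
\]
So I must show that $K_1-K_0\bar Q$ is stable. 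I would use the Lyapunov identity obtained by rewriting \eqref{eq:Q:Riccati:algebraic}:
\[
(K_1-K_0\bar Q)'\bar Q+\bar Q(K_1-K_0\bar Q) = -\tfrac1{\theta+1}A'(\Sigma\Sigma')^{-1}A-\bar QK_0\bar Q .
\]
Expanding $K_0=\theta\Lambda\Lambda'-\tfrac{\theta^2}{\theta+1}\Lambda\Sigma'(\Sigma\Sigma')^{-1}\Sigma\Lambda'$, I use \eqref{eq:theo:ergodic:mainassumption} to make the right-hand side negative definite. Together with $\bar Q>0$ (item (iii) of Proposition~\ref{prop:Qqk:asymptotic}, under \eqref{eq:ctrlcond:BA}), Lyapunov's theorem then gives stability. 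Once $M$ is stable, the mean converges. The covariance converges to the solution of $MP+PM'+\Lambda\Lambda'=0$, and this limit is nondegenerate by the controllability condition \eqref{eq:ctrlcond:BA:theorem}. This step is the main obstacle: the sign bookkeeping in the Lyapunov argument, getting exactly \eqref{eq:theo:ergodic:mainassumption} to dominate the indefinite part of $\bar QK_0\bar Q$.

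\textbf{Step 3: saddle inequalities (items 2 and 3).} Fix an admissible $\Gamma$ and play $h^*$. By Step 1, $\mathcal L^{h^*,\gamma}\bar u+\theta g(x,h^*,\gamma)\le\rho$. Applying It\^o's formula to $\bar u(X_t)$ under $\mathbb P^\Gamma$ and taking expectations, with localization justified by Gaussian moments, gives
\[
\mathbf E^{\mathbb P^\Gamma}_x\Big[\theta\int_0^Tg\,dt\Big]\le \rho T+\bar u(x)-\mathbf E^{\mathbb P^\Gamma}_x[\bar u(X_T)] .
\]
Admissibility of $\Gamma$ gives bounded first and second moments of $X_T$, so $\mathbf E[\bar u(X_T)]/T\to0$. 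Dividing by $T$ and taking $\limsup$ yields the left inequality. The right inequality follows symmetrically, using $\gamma^*$ and the reverse inequality from the infimum over $h$. For this side one needs $\mathbf E^{\mathbb P^{\Gamma^*}}[\bar u(X_T)]/T\to0$, and this holds by Step 2. With $(h^*,\gamma^*)$ both inequalities are equalities, which gives the value identity. Optimality in \eqref{eq:SDG:ergodic} then follows from the saddle property in the usual way.
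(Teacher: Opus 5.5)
Your argument is correct. Step~3 is essentially the paper's verification argument: It\^o's formula for $\bar u(X_t)$ under $\mathbb P^\Gamma$, the pointwise saddle inequality for the Hamiltonian, and $\mathbf E[\bar u(X_T)]/T\to0$ from the moment conditions in $\bar{\mathcal A}^\Gamma$. Step~1 makes explicit what the paper only asserts before the theorem. The genuine difference is in Step~2.

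The paper does not prove that $K_2:=K_1-K_0\bar Q$ is stable. It imports this from Kuroda and Nagai (their Theorem~4.2, convergence of $K_1-K_0Q_T$ to a stable limit). It then uses their Lemma~5.1, which is where \eqref{eq:theo:ergodic:mainassumption} enters, to get $\bar P<\theta\bar Q^{-1}$ for the limiting covariance. You instead read stability off the Lyapunov form of \eqref{eq:Q:Riccati:algebraic}, which is self-contained.

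The step you flag as the main obstacle is actually easy. Since $\Sigma'(\Sigma\Sigma')^{-1}\Sigma$ is an orthogonal projection, $\mathcal P^-\ge\frac{1}{\theta+1}I_d$ and hence $K_0\ge0$. So $\bar QK_0\bar Q$ is positive semidefinite, not indefinite, and the right-hand side of your identity is automatically $\le0$. The assumption \eqref{eq:theo:ergodic:mainassumption} is then needed only to force $A'(\Sigma\Sigma')^{-1}A>0$ (its left-hand side is $\ge0$), which makes the right-hand side negative definite.

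Even that is dispensable. Write the right-hand side as $-N$. If $K_2v=\lambda v$, then $2\,\mathrm{Re}\,\lambda\,v^*\bar Qv=-v^*Nv\le0$. In the equality case $K_0\bar Qv=0$, so $K_1v=\lambda v$ and $\mathrm{Re}\,\lambda<0$ by \eqref{eq:stabcond:K1}. What your route does not deliver is the quantitative bound $\bar P<\theta\bar Q^{-1}$, but Definition~\ref{def:class:barA} does not require it.

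Three details to tighten:

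\emph{Nondegeneracy of the limiting covariance.} Note that $K_2=B+\Lambda F$ with $F=-\frac{\theta}{\theta+1}\Sigma'(\Sigma\Sigma')^{-1}A-\theta\mathcal P^-\Lambda'\bar Q$. So controllability of $(K_2,\Lambda)$ is inherited, by feedback invariance, from controllability of the state pair $(B,\Lambda)$. This is how \eqref{eq:ctrlcond:BA:theorem} has to be read; the paper takes the same step without comment.

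\emph{Step~3.} For a general admissible $\Gamma$, $X$ is not Gaussian under $\mathbb P^\Gamma$. The martingale property of $\int_0^T D\bar u(X_t)'\Lambda\,dW^\Gamma_t$ should instead rest on the second-moment conditions built into $\bar{\mathcal A}^\Gamma$.

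\emph{Step~1.} The finite-horizon analogy is $\dot k\mapsto-\bar k$ (since $k_t\approx\bar k\,(T-t)$), not $\dot k\mapsto\bar k$. Checking the constant term directly against \eqref{eq:kbar:integral:Infhorizon}, as you propose, does give exactly $\rho=-\theta\bar k$. Do not use \eqref{eq:k:integral} as a template, because its integrand as printed has the opposite overall sign. To see this, take $a=0$, $A=0$, $C=0$, $\Xi=0$: then $Q=q=0$ and $u(t,x)=\theta c\,(T-t)$, so $k_t=-c\,(T-t)$, whereas \eqref{eq:k:integral} gives $+c\,(T-t)$.
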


\begin{proof}{Proof}
See Appendix \ref{app:proof:theo:ergodic}.
\end{proof}

\subsection{Relation between the Ergodic Stochastic Differential Game and the Free Energy--Entropy Duality}

The preceding theorem identifies the long-run value of the ergodic stochastic differential game induced by the free energy--entropy duality. Under the assumptions of Theorem~\ref{theo:ergodic}, the stationary saddle point $(H^*,\Gamma^*)$ yields the long-run value
\begin{align}\label{eq:ergodic:value:rho}
\rho
=
\inf_{H \in \bar{\mathcal A}^H}
\sup_{\Gamma \in \bar{\mathcal A}^\Gamma}
\limsup_{T\to\infty}
\frac1T
\mathbf E_x^{\mathbb P^\Gamma}
\left[
\theta\int_0^T
g(X_t,h_t,\gamma_t;\theta)\,dt
\right].
\end{align}
Consequently, the limiting duality relation gives
$\inf_{H\in\bar{\mathcal A}^H}
I_\infty(H,\theta)
=
e^\rho$.
Equivalently, since $I_\infty(H,\theta)=e^{-\theta J_\infty(H,\theta)}$ and $\theta>0$, the optimal long-run risk-sensitive criterion is
\begin{align}\label{eq:Jinfty:rho}
\sup_{H\in\bar{\mathcal A}^H}
J_\infty(H,\theta)
=
-\frac{\rho}{\theta}.
\end{align}

\subsection{The Ergodic Kelly Criterion}
\label{sec:Kelly:ergodic}

The ergodic Kelly criterion is the risk-neutral limiting case of the ergodic risk-sensitive criterion. For a policy $H$, define
\begin{align}\label{eq:J:Kelly:ergodic}
J_\infty^K(x;H)
:=
\liminf_{T\to\infty}
\frac{1}{T}
\mathbf{E}_x\left[R_T-R_0\right].
\end{align}
Under constant coefficients, the corresponding relative value function $\Phi_\infty^K(x) :=
\sup_{H\in\bar{\mathcal A}^H}
J_\infty^K(x;H)$ solves the following ergodic Bellman--Isaacs PDE (see \eqref{eq:BellmanIsaacs:ergodic}-\eqref{eq:Hamiltonian:Hpm}):
\begin{align}\label{eq:HJB:Kelly:ergodic}
\rho_K
=
(b+Bx)'D\Phi_\infty^K(x)
+ \frac{1}{2}\tr\left(\Lambda\Lambda'D^2\Phi_\infty^K(x)\right)
+ \sup_{h\in\mathbb R^m} g^K(x,h).
\end{align}

The pointwise maximizer is
\begin{align}\label{eq:hK:ergodic}
h^K(x)
=
(\Sigma\Sigma')^{-1}(a+Ax).
\end{align}
The relative value function is quadratic,
$\Phi_\infty^K(x)
=
\frac{1}{2} x'\bar Q^Kx + \left(\bar q^K\right)'x$,
where $\bar Q^K,\bar q^K$, and $\rho_K$ solve the corresponding algebraic Kelly equations obtained from \eqref{eq:HJB:Kelly:ergodic}.

\section{Actor--Critic Methods for Continuous-Time LQG Stochastic Differential Games}
\label{sec:PolicyGradientMethods}

Sections~\ref{sec:RSBAM} and~\ref{sec:RSBAM:infinite} provide model-based finite- and infinite-horizon solutions to the duality-induced stochastic differential games. Building on this foundation, we develop reinforcement learning algorithms. We retain the LQG specification as a modeling framework, but no longer assume that the coefficients are known. 

We focus on continuous-time $q$-learning actor--critic methods, following the martingale characterization of \citet{jiaPolicyEvaluationTemporalDifference2022,jiaPolicyGradientActor2022,jiaQLearningContinuousTime}. Relative to that general framework, our setting is specialized to the LQG stochastic differential game generated by the benchmarked investment problem. The quadratic form of the value functions motivates the critic parametrization, while the affine form of the saddle-point policies motivates the actor parametrization. The actors are deterministic Markov policies in the sense of \citet{silverDeterministicPolicyGradient2014}: they map the time--state pair directly into the portfolio control and the adversarial control.

The finite-horizon game of Section~\ref{sec:RSBAM} leads to an \emph{episodic} actor--critic algorithm, in which each episode corresponds to one run of the model over $[0,T]$. The ergodic game of Section~\ref{sec:RSBAM:infinite} leads instead to a \emph{continuing} actor--critic algorithm, in which learning proceeds along a non-terminating trajectory. This section develops the \emph{episodic} actor critic--algorithm in detail. The \emph{continuing} algorithm follows with some minor adaptations. In both cases, the partly model-based structure reduces the burden of function approximation and provides interpretable reference policies, including the Kelly policy discussed in Subsection~\ref{sec:PolicyGradientMethods:Explainability}.

\subsection{Preliminary Results for Episodic Tasks}\label{sec:PG:Preliminary}
\subsubsection{State-Value Function $v$}

Recall that the admissible strategies in $\mathcal{A}^H_T$ and $\mathcal{A}^\Gamma_T$ are
given by Markov control processes. This allows us to define the value function $v$ associated with the strategies $H = (h_s)_{s \in [t,T]} \in\mathcal{A}^H_T$, $\Gamma = (\gamma_s)_{s \in [t,T]} \in\mathcal{A}^\Gamma_T$ as the $\mathbb{P}^\Gamma$-expectation 
\begin{align}\label{eq:valuefunc:v}
    v(t,x;H,\Gamma,\theta)   
    := \mathbf{E}_{t,x}^{\mathbb{P}^\Gamma} \left[ \theta \int_{t}^{T} g(s,X_s,h_s,\gamma_s;\theta) \, ds \right],
\end{align}
where $g$ is defined at \eqref{eq:g}. Hence, the \emph{optimal value function} $u$ defined at~\eqref{eq:valuefunction:u} is 
\begin{align}\label{eq:vf:u}
    u(t,x) = \inf_{H \in \mathcal{A}^H_T} \sup_{ \Gamma \in \mathcal{A}^\Gamma_T} v(t,x;H,\Gamma,\theta).
\end{align} 

\begin{definition}[on-and off-policy learning]\label{def:on-off-policy}
    \emph{On-policy} learning optimizes the \emph{target} policy directly. \emph{Off-policy} learning optimizes the \emph{target} policy based on a \emph{behavior} policy (see Definition~\ref{def:classA:behavior} below). 
\end{definition}

\begin{remark}
The change-of-measure representation of the state-value function $\nu$ can be viewed as an application of importance sampling with $\chi^{\Gamma}_{[0,T]}$ at \eqref{eq:RNderivative:gamma} acting as the likelihood ratio. In reinforcement learning, this structure appears in the \emph{off-policy} approach. Our approach performs \emph{on-policy} learning while still relying on importance sampling in the probabilistic sense to handle the risk-sensitive change of measure.
\end{remark}

By the Feynman--Kac formula, for fixed policies $H$ and $\Gamma$, the value function $v$ solves the parabolic PDE
\begin{align}\label{eq:PDEcharacterization:v}
    \frac{\partial v}{\partial t}
    + \left[b_s + B_s x + \Lambda_s \gamma(s,x) \right]' \frac{\partial v}{\partial x}
    + \frac{1}{2} \tr \left(\Lambda_s\Lambda_s' \frac{\partial^2 v}{\partial x^2} \right)
    +\theta g(s,x,h(s,x),\gamma(s,x);\theta) 
    = 0, \; (t,x) \in [0,T) \times \mathbb{R}^n
\end{align}
with terminal condition $v(T,x;H,\Gamma,\theta) = 0, \; \forall x \in \mathbb{R}^n$. The optimal value function $u(t,x)$ satisfies the Bellman--Isaacs PDE at~\eqref{eq:BellmanIsaacs:u}-\eqref{eq:Hamiltonian:H+}, that is
\begin{align}\label{eq:PDEBI:u}
    \frac{\partial u}{\partial t}
    + \inf_{h \in \mathbb{R}^m} \sup_{ \gamma  \in \mathbb{R}^d}\left\{\left[b_s + B_s x + \Lambda_s \gamma \right]' \frac{\partial u}{\partial x}
    + \frac{1}{2} \tr \left(\Lambda_s\Lambda_s' \frac{\partial^2 u}{\partial x^2} \right)
    +\theta g(s,x,h,\gamma;\theta) \right\}
    = 0,
\end{align}
with terminal condition $u(T,x) = 0, \; \forall x \in \mathbb{R}^n$.

Next, for a filtration $\left(\mathcal{F}_s\right)_{s \in [0,T]}$, semimartingale $(Y_s)_{s \geq 0}$, and measure $\mathbb{P}^\Gamma$ defined at \eqref{eq:RNderivative:gamma}, we denote by
\begin{align}\label{eq:HilbertSpace:LF2}
    L_\mathcal{F}^2 \left([t,T];Y;\mathbb{P}^\Gamma\right) 
    :=& \Bigg\{(\kappa_s)_{ s \in [t,T]}: \kappa_s \text{ is } \mathcal{F}_s\text{-progressively measurable, }
    \mathbf{E}^{\mathbb{P}^\Gamma} \left[ \int_t^T \lvert \kappa_s \rvert^2 d \langle Y \rangle_s \right] < \infty \Bigg\} 
\end{align}
the Hilbert space with $L^2$-norm $\| \kappa \|_{L^2} := \left( \mathbf{E}^{\mathbb{P}^\Gamma}\left[ \int_t^T \lvert \kappa_s \rvert^2 d \langle Y \rangle_s \right]\right)^{1/2}$, 
where $\langle \cdot \rangle$ denotes quadratic variation. We now introduce the following martingale process.

\begin{definition}\label{def:Mt}
    For given admissible strategies $H \in \mathcal{A}^H_T, \Gamma \in \mathcal{A}^{\Gamma}_T$, let $\left(M_s\right)_{s \in [t,T]}$ be the $\mathbb{P}^\Gamma$-process
    \begin{align}\label{eq:Martingale:M}
        M_s := v(s,X_s;H,\Gamma,\theta)
        + \theta \int_t^s g(u,X_u,h_u,\gamma_u;\theta) du,
    \end{align}
    where  $v$ is the function defined at \eqref{eq:valuefunc:v},  $g$ is defined at \eqref{eq:g}, and $\left(X_s\right)_{s \in [t, T]}$ satisfies \eqref{eq:state:Pgamma:FO} under $\mathbb{P}^\Gamma$ with an initial value that is either fixed, that is $X_t = x \in \mathbb{R}^n$, or random, meaning here that $X_t = x \sim N(\mu, \Lambda_0)$ and is independent of the Wiener process $W_s$.
\end{definition}

Proposition 1 in \citet{jiaPolicyEvaluationTemporalDifference2022} shows that under standard assumptions, the process $\left(M_s\right)_{s \in [t,T]}$ is a square-integrable martingale uniquely defined by the value function $v$. We restate their result in our setting and extend it to allow a random initial value for the process $\left(X_s\right)_{s \in [t,T]}$:

\begin{proposition}\label{prop:M:mart}
    Under the assumptions on the functions $b,B,\Lambda,a,A,\Sigma,c,C,\Xi$ presented in Section~\ref{sec:RSBAM:model}, the process $\left(M_s\right)_{s \in [t,T]}$ introduced in Definition \ref{def:Mt} is a square-integrable $\mathbb{P}^\Gamma$-martingale. 

    Additionally, the process $\left(M_s\right)_{s \in [t,T]}$ is unique. Formally, if there is a continuous function $\check{v}$ such that 
    \begin{enumerate}[(i)]
        \item $\check{v}(T,x) = 0, \; \forall x \in \mathbb{R}^n$, and
        \item the process $\check{M}_s := \check{v} \left(s,X_s\right) + \theta \int_t^s g(u,X_u,h_u,\gamma_u;\theta) du$ is a square-integrable $\mathbb{P}^\Gamma$-martingale, 
    \end{enumerate}
    then $\check{v} = v$ on $[0,T] \times \mathbb{R}^n$. 
\end{proposition}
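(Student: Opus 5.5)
The plan is to follow the argument of Proposition~1 in \citet{jiaPolicyEvaluationTemporalDifference2022}, adapted to the measure $\mathbb{P}^\Gamma$ and to a random initial state. The martingale property and the uniqueness claim are handled separately.

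\emph{Martingale property.} The first step is to establish enough regularity of $v$. The target policies are Markov, and in the setting of interest they are affine in $x$, as in Theorem~\ref{theo:main_previouspaper}. Under $\mathbb{P}^\Gamma$ the state therefore solves a linear SDE \eqref{eq:state:Pgamma:FO} with continuous coefficients (Assumption~\ref{as:coefficients}), and $g(s,x,h(s,x),\gamma(s,x);\theta)$ is quadratic in $x$. Consequently $v$ is a classical $C^{1,2}$ solution of the Feynman--Kac PDE \eqref{eq:PDEcharacterization:v}, with at most quadratic growth. In fact $v$ is itself quadratic, with Riccati-type coefficients.

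Next, apply It\^o's formula to $v(s,X_s)$ under $\mathbb{P}^\Gamma$. The PDE \eqref{eq:PDEcharacterization:v} cancels the drift, which gives
\[
M_s = v(t,X_t) + \int_t^s D v(u,X_u)'\Lambda_u\, dW^\Gamma_u .
\]
Since $Dv$ is affine in $x$ and the Gaussian state $X$ has finite moments of all orders on $[t,T]$, we get $\mathbf{E}^{\mathbb{P}^\Gamma}\int_t^T |Dv'\Lambda|^2 du<\infty$. Hence the stochastic integral is a square-integrable martingale.

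For a random initial value $X_t\sim N(\mu,\cdot)$ independent of $W$, I will condition on $\mathcal F_t$. The term $v(t,X_t)$ is $\mathcal F_t$-measurable and square-integrable, because it is quadratic in a Gaussian variable. Square integrability and the martingale property then carry over, using the tower property and the independence of $X_t$ from the driving noise.

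\emph{Uniqueness.} Let $\check v$ satisfy (i) and (ii). Because $\check M$ and $M$ are martingales, for each $s$
\[
\check v(s,X_s) = \mathbf{E}^{\mathbb{P}^\Gamma}\Big[\check M_T - \theta\int_t^s g\,du \,\Big|\, \mathcal F_s\Big]
= \mathbf{E}^{\mathbb{P}^\Gamma}\Big[\theta\int_s^T g\,du \,\Big|\, \mathcal F_s\Big] = v(s,X_s),
\]
where the middle equality uses $\check v(T,\cdot)=0$. By the Markov property this identity holds a.s. It must then be upgraded to all $(s,x)$. For this I will use that $\Lambda_s\Lambda_s'$ yields a nondegenerate Gaussian law of $X_s$: with either a deterministic start or a Gaussian start with $P>0$, $X_s$ has full support for $s>t$. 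Continuity of $\check v$ and $v$ then gives equality on $(t,T]\times\mathbb{R}^n$. For $s=t$, one varies the starting point $x$ (or uses continuity in time), since the martingale property is required for every initial condition, exactly as in Jia--Zhou.

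\emph{Main obstacle.} I expect the hardest step to be this full-support argument, because nondegeneracy of $\Lambda_s\Lambda_s'$ is not explicitly assumed. If $\Lambda$ is degenerate, I would fall back on a controllability (H\"ormander-type) condition on the pair $(B+\Lambda\,\cdot,\Lambda)$, in the spirit of \eqref{eq:ctrlcond:BA:theorem}. Failing that, I would interpret uniqueness as holding for every deterministic initial point $x$, as in the original statement of \citet{jiaPolicyEvaluationTemporalDifference2022}. A secondary technical point is verifying that the exponential martingale condition in Definition~\ref{def:classAgammaT:fullobs} makes $\mathbb{P}^\Gamma$ a genuine probability measure, so that Girsanov's theorem applies.
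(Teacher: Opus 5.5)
Your argument is correct for the case you treat. It uses a different mechanism from the Jia--Zhou proof the paper defers to, and it narrows the scope.

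That proof rests on the Markov property rather than on It\^o's formula. Since $X$ is an uncontrolled Markov diffusion under $\mathbb{P}$ and the Girsanov density factorizes at $s$, Bayes' rule gives $v(s,X_s)=\mathbf{E}^{\mathbb{P}^\Gamma}[\theta\int_s^T g\,du\mid\mathcal{F}_s]$. Hence $M_s=\mathbf{E}^{\mathbb{P}^\Gamma}[\theta\int_t^T g\,du\mid\mathcal{F}_s]$ is a Doob martingale. It is square-integrable as soon as $\theta\int_t^T g\,du\in L^2(\mathbb{P}^\Gamma)$, and no smoothness of $v$ is needed. You already write this identity in your uniqueness step, so it alone gives the first claim.

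Your Feynman--Kac/It\^o route does buy the explicit representation $M_s=v(t,X_t)+\int_t^s Dv(u,X_u)'\Lambda_u\,dW^\Gamma_u$ and a concrete moment check. However, it needs $v\in C^{1,2}$, which you secure only by assuming the policies are affine in $x$. The proposition is stated for arbitrary admissible Markov pairs, so as written you prove only the special case relevant to the affine actors of Section~\ref{sec:PolicyGradientMethods}.

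For uniqueness, the full-support step you single out as the main obstacle can be bypassed. For $t>0$, a process started at time $t$ carries no information about $\check v$ on $[0,t)$. The conclusion on $[0,T]\times\mathbb{R}^n$ therefore already forces reading (ii) as holding for every initial pair $(t,x)$, as in Jia--Zhou. With a deterministic start, the martingale property and (i) give $\check v(t,x)=\check M_t=\mathbf{E}^{\mathbb{P}^\Gamma}[\check M_T]=\mathbf{E}^{\mathbb{P}^\Gamma}[\theta\int_t^T g\,du]=v(t,x)$, with no condition on $\Lambda$.

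If you keep a fixed Gaussian start instead, $P>0$ alone gives full support of $X_s$ for every $s\in[t,T]$. This is because the covariance of $X_s$ dominates $\Phi(s,t)P\Phi(s,t)'$, where $\Phi$ is the invertible state-transition matrix. Degeneracy of $\Lambda\Lambda'$ matters only for deterministic starts.
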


\begin{proof}{Proof}
    The proof follows along the lines of that of Proposition 1 in \citet{jiaPolicyEvaluationTemporalDifference2022}. The extension to a random initial value follows immediately from the assumption that $X_0$ is independent of $W_s$. 
\end{proof}

\begin{corollary}[Martingale Orthogonality Condition]\label{coro:M:mart:orthogonality}
Let $(M_s)_{s \in [t,T]}$ be the square-integrable $\mathbb{P}^\Gamma$-martingale from Definition \ref{def:Mt}, for fixed $H \in \mathcal{A}^H_T, \Gamma \in \mathcal{A}^\Gamma_T$, and given the  fixed or random initial condition $X_t=x$. Then, for any test function $\xi \in L_\mathcal{F}^2 \left([t,T];M;\mathbb{P}^\Gamma\right)$, 
\begin{align}\label{eq:MartOrthoCond}
    \mathbf{E}_{t,x}^{\mathbb{P}^\Gamma} \left[ \int_t^s \xi_u dM_u\right] 
    = \mathbf{E}_{t,x}^{\mathbb{P}^\Gamma} \left[ \int_t^s \xi_u \left\{ dv \left(u,X_u; H, \Gamma, \theta\right)
    + \theta g(u,X_u,h_u,\gamma_u;\theta) du\right\}\right] 
    = 0.
\end{align}
\end{corollary}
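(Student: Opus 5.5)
The plan is to derive the corollary directly from Proposition~\ref{prop:M:mart}, using the basic properties of stochastic integrals against square-integrable martingales. There are two claims: the first equality in \eqref{eq:MartOrthoCond} and the fact that the common value is zero. I would establish them in that order.

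\textbf{Step 1: the first equality.} By Definition~\ref{def:Mt}, $M_s = v(s,X_s;H,\Gamma,\theta) + \theta\int_t^s g(u,X_u,h_u,\gamma_u;\theta)\,du$. The second term is absolutely continuous and has finite variation, so the differential is
\begin{align*}
dM_u = dv(u,X_u;H,\Gamma,\theta) + \theta g(u,X_u,h_u,\gamma_u;\theta)\,du .
\end{align*}
Substituting this into $\int_t^s \xi_u\, dM_u$ gives the middle expression in \eqref{eq:MartOrthoCond} pathwise, before any expectation is taken.

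\textbf{Step 2: the zero expectation.} By Proposition~\ref{prop:M:mart}, $M$ is a square-integrable $\mathbb P^\Gamma$-martingale. For $\xi \in L^2_{\mathcal F}([t,T];M;\mathbb P^\Gamma)$, the It\^o isometry gives
\begin{align*}
\mathbf E^{\mathbb P^\Gamma}\left[\left(\int_t^s \xi_u\, dM_u\right)^2\right]
= \mathbf E^{\mathbb P^\Gamma}\left[\int_t^s |\xi_u|^2\, d\langle M\rangle_u\right] < \infty .
\end{align*}
Hence $\int_t^\cdot \xi_u\, dM_u$ is a true square-integrable martingale, not merely a local martingale. It starts at zero at time $t$, so its expectation at time $s$ is zero.

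\textbf{Random initial condition.} I would first prove the claim conditionally on $X_t = x$. I would then integrate over the law $N(\mu,\Lambda_0)$, using that $X_t$ is independent of $W$ (and hence of $W^\Gamma$ under the equivalent measure). This mirrors the extension already used in the proof of Proposition~\ref{prop:M:mart}.

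\textbf{Main obstacle.} The key technical point is confirming that the integrand is square integrable with respect to $d\langle M\rangle$, so that the stochastic integral is a genuine martingale. This holds by the very definition of $L^2_{\mathcal F}$. Everything else reduces to the martingale property supplied by Proposition~\ref{prop:M:mart}.
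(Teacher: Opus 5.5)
Your proposal is correct and takes the same approach as the paper. The paper gives no separate proof; the remark after the corollary only says that the result follows from the martingale property of $M$ in Proposition~\ref{prop:M:mart}. You make explicit the step it leaves implicit: $\xi \in L^2_{\mathcal F}([t,T];M;\mathbb{P}^\Gamma)$ turns $\int_t^\cdot \xi_u\,dM_u$ into a true square-integrable martingale starting at zero at time $t$, so its expectation vanishes.
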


\begin{remark}
    Jia and Zhou noted that the term `test function' is used by convention; $\xi_s$ is in fact a stochastic process. Hence, the corollary follows from the properties of martingales. 
\end{remark}

\subsubsection{Action-Value Function $q$}

So far, we have discussed the state-value function $v$. The action-value function $q$ also plays a pivotal role in deterministic policy gradient algorithms \citep[see][]{silverDeterministicPolicyGradient2014}. Consistently with \citet{jiaQLearningContinuousTime}, we now define the function $q$ as the ``left-hand side'' of the PDE \eqref{eq:PDEcharacterization:v}:
\begin{align}\label{eq:q:func}
    q\left( s,x,h,\gamma;\theta\right)
    := \frac{\partial v}{\partial t}
    + \left[b_s + B_s x + \Lambda_s \gamma \right]' \frac{\partial v}{\partial x} 
    + \frac{1}{2} \tr \left(\Lambda_s\Lambda_s' \frac{\partial^2 v}{\partial x^2} \right)
    +\theta g(s,x,h,\gamma;\theta).
\end{align}
Following \citet{jiaQLearningContinuousTime}, we call this object the continuous-time action-value function, although analytically it is the Hamiltonian residual associated with the value function. These authors showed that $q$ is the continuous-time analog to the action-value function $Q$ in discrete time.

We now present a proposition motivated by the martingale characterization in Jia and Zhou, and which adapts the logic of their Theorem 7 to deterministic Markov policies. A key difference between the deterministic policies employed in this paper and the exploratory policies considered in \citet{jiaQLearningContinuousTime} is that deterministic policies do not themselves identify the action dependence of $q$ \citep{silverDeterministicPolicyGradient2014}. Importantly, this result addresses both target policies, policies we aim to optimize, and behavior policies, policies used for evaluation.  The class of behavior policies we work with is the following:

\begin{definition}[Behavior Policy] \label{def:classA:behavior}
A \emph{behavior investment policy} $H=(h_s)_{s\in[0,T]}$ belongs to $\mathcal A_T^{H,\mathrm{beh}}$ if $h_s$ is $\mathbb R^m$-valued, progressively measurable, c\`adl\`ag, and square-integrable on $[0,T]$.
A \emph{behavior adversarial policy} $\Gamma=(\gamma_s)_{s\in[0,T]}$ belongs to $\mathcal A_T^{\Gamma,\mathrm{beh}}$ if $\gamma_s$ satisfies the same conditions and $\chi^\Gamma_{[0,T]}$ is an exponential martingale.
\end{definition}

\begin{remark}
We will use a behavior policy defined as a perturbed version of the target policy in the implementation in Section \ref{sec:Implementation} below. We therefore implicitly assume that the filtration $\left(\mathcal F_s \right)_{0\leq s\leq T}$ tracks both the Brownian motion $W$ and the independent perturbation. Note as well that admissible policies, the set of policies in which we seek the optimal policy,  are a subset of behavior policies.    
\end{remark}

Under the regularity and uniqueness conditions of \citet[Theorem~7]{jiaQLearningContinuousTime}, the following martingale characterization specializes to the present LQG game. We use the superscripts $\mathrm{tar}$ and $\mathrm{beh}$ to distinguish target and behavior policies.   

\begin{proposition}\label{prop:qlearning:Martingale_condition}
Fix target policies $H^\mathrm{tar}\in\mathcal A_T^H$ and $\Gamma^\mathrm{tar}\in\mathcal A_T^\Gamma$. Let
$\tilde v:[t,T]\times\mathbb R^n\to\mathbb R$,
$\tilde q:[t,T]\times\mathbb R^n\times\mathbb R^m\times\mathbb R^d\to\mathbb R$
be sufficiently smooth, with $\tilde v(T,x)=0$ and
\begin{align}\label{eq:prop:qlearning:Martingale_condition:tildeq}
\tilde q(s,x,h_s^\mathrm{tar},\gamma_s^\mathrm{tar};\theta)=0,
\qquad
(s,x)\in[t,T]\times\mathbb R^n,
\end{align}
without $\tilde q$ being identically zero. Then $\tilde v$ and $\tilde q$ are the value function and $q$-function associated with
$(H^\mathrm{tar},\Gamma^\mathrm{tar})$ if and only if
\begin{align}\label{eq:qlearning:Martingale:M}
    \tilde{M}_s 
    :=  \tilde{v} \left(s,X_s\right) 
    +   \int_t^s \left[ \theta g(u,X_u,h^\mathrm{tar}_u, \gamma^\mathrm{tar}_u;\theta) - \tilde{q}(u,X_u,h^\mathrm{tar}_u, \gamma^\mathrm{tar}_u;\theta) \right]du ,
\end{align}
is a $\mathbb P^{\Gamma^\mathrm{tar}}$-martingale.

Moreover, if $\tilde v$ and $\tilde q$ are the corresponding value and $q$-functions, then for any behavior policies
$H^\mathrm{beh}\in\mathcal A_T^{H,\mathrm{beh}}$ and $\Gamma^\mathrm{beh}\in\mathcal A_T^{\Gamma,\mathrm{beh}}$, the process
\begin{align}\label{eq:qlearning:Martingale:M:check}
    \tilde{M}_s 
    :=  \tilde{v} \left(s,X_s\right) 
    +   \int_t^s \left[ \theta g(u,X_u,h^\mathrm{beh}_u, \gamma^\mathrm{beh}_u;\theta) - \tilde{q}(u,X_u,h^\mathrm{beh}_u, \gamma^\mathrm{beh}_u;\theta) \right]du ,
\end{align}
is a $\mathbb P^{\Gamma^\mathrm{beh}}$-martingale. Conversely, the existence of such a behavior-policy martingale identifies $\tilde v$ and $\tilde q$. Finally, if the target policies satisfy the pointwise saddle-point conditions
\begin{align}\label{eq:h_gamma:argmin}
h_s^\mathrm{tar}\in
\argmin_h \tilde q(s,x,h,\gamma_s^\mathrm{tar};\theta),
\qquad
\gamma_s^\mathrm{tar}\in
\argmax_\gamma \tilde q(s,x,h_s^\mathrm{tar},\gamma;\theta),
\end{align}
then $H^\mathrm{tar}$ and $\Gamma^\mathrm{tar}$ are optimal and $\tilde v$ is the optimal value function.
\end{proposition}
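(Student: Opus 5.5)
We argue directly from It\^o's formula under the measure $\mathbb P^{\Gamma}$ attached to whichever adversarial policy is used, combined with Proposition~\ref{prop:M:mart} and the definition \eqref{eq:q:func} of $q$.

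\emph{Target-policy characterization, ``only if''.} Suppose $\tilde v=v(\cdot;H^\mathrm{tar},\Gamma^\mathrm{tar},\theta)$ and $\tilde q=q$. Then \eqref{eq:prop:qlearning:Martingale_condition:tildeq} is exactly the Feynman--Kac PDE \eqref{eq:PDEcharacterization:v} evaluated along the target controls. Hence the integrand $-\tilde q$ in \eqref{eq:qlearning:Martingale:M} vanishes, $\tilde M$ coincides with $M$ of Definition~\ref{def:Mt}, and $\tilde M$ is a martingale by Proposition~\ref{prop:M:mart}.

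\emph{Target-policy characterization, ``if''.} Suppose $\tilde M$ is a $\mathbb P^{\Gamma^\mathrm{tar}}$-martingale. Applying It\^o's formula to $\tilde v(s,X_s)$ with the dynamics \eqref{eq:state:Pgamma:FO} gives
\[
d\tilde M_s = \left[\mathcal D\tilde v + \theta g - \tilde q\right](s,X_s,h^\mathrm{tar}_s,\gamma^\mathrm{tar}_s)\,ds + (D\tilde v)'\Lambda_s\,dW^{\Gamma}_s ,
\]
where $\mathcal D\tilde v$ denotes $\partial_t\tilde v$ plus the controlled generator. A continuous martingale of finite variation is constant, so the drift vanishes $ds\otimes d\mathbb P$-a.e. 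By nondegeneracy of the state law (Gaussian with positive covariance, or full support under $\Lambda\Lambda'$) and continuity, it then vanishes for all $(s,x)$. Using \eqref{eq:prop:qlearning:Martingale_condition:tildeq}, $\tilde v$ solves \eqref{eq:PDEcharacterization:v} with zero terminal condition. Uniqueness in Proposition~\ref{prop:M:mart} therefore gives $\tilde v=v$, and then $\tilde q=q$ by the definition \eqref{eq:q:func}.

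\emph{Behavior policies.} The key observation is that $q(s,x,h,\gamma)$ is defined for \emph{all} actions as $\partial_t v + \mathcal L^{h,\gamma}v + \theta g$. Under $\mathbb P^{\Gamma^\mathrm{beh}}$ the state drift carries $\Lambda\gamma^\mathrm{beh}_s$. It\^o's formula then gives drift $\partial_tv+\mathcal L^{h^\mathrm{beh}_s,\gamma^\mathrm{beh}_s}v + \theta g - q = 0$ identically, whatever the (progressively measurable, c\`adl\`ag) behavior actions are. The remaining term is a local martingale, and square-integrability follows from the quadratic growth of $v$ (Theorem~\ref{theo:main_previouspaper} structure), Gaussian moments, and the $L^2$ bound on the behavior controls. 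For the converse, the drift must vanish along the behavior trajectory. This identifies $\tilde q = \partial_t\tilde v+\mathcal L\tilde v+\theta g$ at the visited $(s,x,h,\gamma)$; I will assume the behavior perturbation has full support in action space, so that the identity holds everywhere by continuity. The target constraint \eqref{eq:prop:qlearning:Martingale_condition:tildeq} and the first part then finish the argument. \emph{This identification step is the main obstacle}: a deterministic target alone cannot pin down the action dependence of $\tilde q$, so the exploration assumption is genuinely needed. I will state it explicitly as a regularity hypothesis, in the spirit of \citet[Theorem~7]{jiaQLearningContinuousTime}.

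\emph{Optimality.} With $\tilde v=v$, the conditions \eqref{eq:h_gamma:argmin} say that $(h^\mathrm{tar},\gamma^\mathrm{tar})$ is a pointwise saddle point of $h,\gamma\mapsto \mathcal L^{h,\gamma}v+\theta g$. This expression is convex in $h$ and concave in $\gamma$ by \eqref{eq:g}. Since $\tilde q=0$ there, $v$ solves the Bellman--Isaacs equation \eqref{eq:PDEBI:u}. A verification argument as in Theorem~\ref{theo:main_previouspaper} then yields the saddle inequalities \eqref{eq:verification:saddle}. Concretely, apply It\^o's formula to $v$ under $\mathbb P^{\Gamma}$ with $h^\mathrm{tar}$ and an arbitrary $\gamma$, where the drift is $\le0$, and symmetrically for the other player. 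Taking expectations with the martingale property gives the two inequalities, and hence $v=u$.
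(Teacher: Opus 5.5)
The step that fails is in your ``if'' direction for the target policies: ``and then $\tilde q=q$ by the definition \eqref{eq:q:func}.'' That definition constrains $q$, not $\tilde q$. The process $\tilde M$ in \eqref{eq:qlearning:Martingale:M} evaluates $\tilde q$ only at $(h^{\mathrm{tar}}_u,\gamma^{\mathrm{tar}}_u)$, and there \eqref{eq:prop:qlearning:Martingale_condition:tildeq} already forces it to vanish. Hence $\tilde M$ does not depend on $\tilde q$ at all. Its martingale property identifies $\tilde v$, and it does so directly through the uniqueness part of Proposition~\ref{prop:M:mart}, without your drift and nondegeneracy detour. But it says nothing about $\tilde q$ off the graph of the target policy.

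Concretely, for smooth (e.g.\ affine) targets, take $\tilde v=v$ and
\[
\tilde q(s,x,h,\gamma)=\|h-h^{\mathrm{tar}}(s,x)\|^2-\|\gamma-\gamma^{\mathrm{tar}}(s,x)\|^2 .
\]
Every hypothesis holds and $\tilde M=M$ is a martingale, yet $\tilde q\neq q$; for instance, $q$ inherits from $\theta g$ the cross term $-\theta h'\Sigma_s\gamma$, which this $\tilde q$ lacks. So the ``if'' half of the first equivalence is false as stated for deterministic targets. The paper's one-line appeal to Theorem~7 of Jia and Zhou does not rescue it. There, the action dependence of the $q$-function is identified because actions are drawn from a full-support stochastic policy, which is precisely what the paper itself says deterministic policies cannot do. Your own later remark that ``a deterministic target alone cannot pin down the action dependence of $\tilde q$'' is correct and contradicts this step.

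The same example shows the gap reaches the final claim. $(h^{\mathrm{tar}},\gamma^{\mathrm{tar}})$ is a pointwise saddle point of this $\tilde q$ for \emph{any} targets, so \eqref{eq:h_gamma:argmin} together with the target-policy martingale cannot yield optimality. Your optimality paragraph is sound only once $\tilde q=q$ is known for all actions. It must therefore rest on the behavior-policy converse under your full-support exploration hypothesis, and that hypothesis should be stated as part of the proposition. The first part should be weakened to ``$\tilde v=v$, and $\tilde q=q$ on the graph of the target policy.''

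Two smaller points:
\begin{itemize}
\item Your appeal to a nondegenerate (Gaussian) law of $X_s$ is not available under the paper's assumptions, since $\Lambda_s\Lambda_s'$ need not be positive definite and $X$ is not Gaussian for non-affine $\gamma$. You can avoid it by using the martingale property from every initial $(t,x)$ together with right-continuity of the behavior actions.
\item Square-integrability in the behavior step cannot be taken from Theorem~\ref{theo:main_previouspaper}, which gives quadratic structure only for the optimal value $u$. For general Markov targets you need an explicit growth condition on $v$ and moment bounds under $\mathbb P^{\Gamma^{\mathrm{beh}}}$, or a restriction to the affine parametrized policies.
\end{itemize}
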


\begin{proof}{Proof}
The result follows the martingale-characterization argument of Theorem 7 in \citet{jiaQLearningContinuousTime}, specialized to the present LQG game and deterministic Markov controls. 
\end{proof}
 
\begin{remark}
The second statement in Proposition
\ref{prop:qlearning:Martingale_condition} connects behavior and target policies.
From the definition of the $q$-function at \eqref{eq:q:func} and the Bellman--Isaacs PDE \eqref{eq:BellmanIsaacs:u}, which forms the foundations of Theorem~\ref{theo:main_previouspaper}, one can see that the candidate policies given at \eqref{eq:hhat}-\eqref{eq:gammahat:2} of Theorem~\ref{theo:main_previouspaper} coincide with those at \eqref{eq:h_gamma:argmin} in Proposition~\ref{prop:qlearning:Martingale_condition}.
\end{remark}

As a corollary to Proposition \ref{prop:qlearning:Martingale_condition}, we have:

\begin{corollary}[Martingale Orthogonality Condition]\label{coro:qlearning:M:mart:orthogonality}
Let $(\tilde{M}_s)_{s \in [t,T]}$ be the square-integrable $\mathbb{P}^\Gamma$-martingale defined at \eqref{eq:qlearning:Martingale:M}, for policies $H= \left(h_s\right)_{s \in [t,T]} \in \mathcal{A}^H_T, \Gamma = \left(\gamma_s\right)_{s \in [t,T]} \in \mathcal{A}^{\Gamma}_T$, and given a fixed or random initial state value $X_t=x$. Furthermore, let $\tilde q:[t,T]\times\mathbb R^n\times\mathbb R^m\times\mathbb R^d\to\mathbb R$ satisfying \eqref{eq:prop:qlearning:Martingale_condition:tildeq} without $\tilde q$ being identically zero. Then, for any test function $\xi \in L_\mathcal{F}^2 \left([t,T];\tilde{M};\mathbb{P}^\Gamma\right)$,
\begin{align}\label{eq:qlearning:MartOrthoCond}
    \mathbf{E}_{t,x}^{\mathbb{P}^\Gamma} \left[ \int_t^s \xi_u d\tilde{M}_u\right] 
    = \mathbf{E}_{t,x}^{\mathbb{P}^\Gamma} \left[ \int_t^s \xi_u \left\{ d\tilde{v} \left(u,X_u\right)
    + \left[\theta g(u,X_u,h_u, \gamma_u;\theta) - \tilde{q}(u,X_u,h_u, \gamma_u;\theta) \right]du\right\}\right] =0
                         \end{align}
\end{corollary}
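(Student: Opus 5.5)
The corollary follows from the martingale property of $\tilde M$ established in Proposition~\ref{prop:qlearning:Martingale_condition}, combined with the elementary fact that stochastic integrals of square-integrable progressively measurable integrands against a square-integrable martingale are themselves martingales with zero initial value. I would proceed as in Corollary~\ref{coro:M:mart:orthogonality}, replacing $M$ by $\tilde M$.

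First, I would invoke Proposition~\ref{prop:qlearning:Martingale_condition} for the given pair $(H,\Gamma)$. Taken as target policies, with $\tilde v,\tilde q$ the associated value and $q$-functions satisfying \eqref{eq:prop:qlearning:Martingale_condition:tildeq}, it makes $\tilde M$ in \eqref{eq:qlearning:Martingale:M} a $\mathbb P^\Gamma$-martingale. Square-integrability is part of the hypothesis of the corollary. If one wants to check it directly, I would argue as follows. The quadratic growth of $\tilde v$ and $\tilde q$ (the LQG structure), the Gaussian-type moments of $X$ under $\mathbb P^\Gamma$ from \eqref{eq:state:Pgamma:FO}, and the square-integrability of the Markov controls together give $\mathbf E^{\mathbb P^\Gamma}[\sup_{s}|\tilde M_s|^2]<\infty$. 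For a random initial condition $X_t\sim N(\mu,P)$, I would condition on $X_t$ and use its independence from $W$, exactly as in the proof of Proposition~\ref{prop:M:mart}.

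Second, for $\xi\in L^2_{\mathcal F}([t,T];\tilde M;\mathbb P^\Gamma)$, I would note that the It\^o integral $N_s:=\int_t^s\xi_u\,d\tilde M_u$ is well defined. By the It\^o isometry, $\mathbf E^{\mathbb P^\Gamma}[N_s^2]=\mathbf E^{\mathbb P^\Gamma}[\int_t^s|\xi_u|^2d\langle\tilde M\rangle_u]<\infty$, so $N$ is a square-integrable martingale with $N_t=0$. Hence $\mathbf E^{\mathbb P^\Gamma}_{t,x}[N_s]=0$, which is the second equality in \eqref{eq:qlearning:MartOrthoCond}.

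Third, the first equality in \eqref{eq:qlearning:MartOrthoCond} is just the differential form of \eqref{eq:qlearning:Martingale:M}:
\[
d\tilde M_u=d\tilde v(u,X_u)+\bigl[\theta g(u,X_u,h_u,\gamma_u;\theta)-\tilde q(u,X_u,h_u,\gamma_u;\theta)\bigr]du .
\]
The $du$ part is absolutely continuous, so substituting it into $\int\xi\,d\tilde M$ is legitimate.

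The only real subtlety is the first step. The integrability needed to upgrade the local martingale to a true square-integrable martingale, and hence to make the It\^o isometry apply, relies on the growth conditions implicit in ``sufficiently smooth''. I would handle this with a localization argument, using stopping times $\tau_k\uparrow T$, followed by dominated convergence based on the $L^2$ bounds above.
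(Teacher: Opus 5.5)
Your proposal is correct and follows the paper's route. The paper treats this corollary as an immediate consequence of the martingale property of $\tilde M$ from Proposition~\ref{prop:qlearning:Martingale_condition}, just as for Corollary~\ref{coro:M:mart:orthogonality}, where it notes the result ``follows from the properties of martingales'' (the integral of an $L^2_{\mathcal F}$ test process against a square-integrable martingale is a martingale started at zero). Your extra discussion of square-integrability and localization is harmless but not needed, since the corollary already assumes $\tilde M$ is a square-integrable martingale.
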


\subsection{Parametrized Function Approximation for Policy Gradient Methods (Episodic Tasks)}\label{sec:PolicyGradientMethods:Intro}

In practice, the value functions $v$ and $q$ associated with the strategies $H$ and $\Gamma$ are not known. Policy gradient methods learn the parameters of a function approximation of the policies and value function. This section introduces the parametrization we will use to approximate the policies $H$ and $\Gamma$, the state-value function $v$, and the action-value function $q$.

We start with the function approximation of the policies. For our partly model-based setting, Theorem~\ref{theo:main_previouspaper} shows that the \emph{optimal policies} are affine in the state variable. To keep the discussion general, we consider a parametrization $H^\phi = \left(h(s,x;\phi)\right)_{s\in [t,T]} \in \mathcal{A}^H_T$ and $\Gamma^\phi = \left(\gamma(s,x;\phi)\right)_{s\in [t,T]}  \in \mathcal{A}^\Gamma_T$ of the policies, where the controls $h(s,x;\phi) \in \mathbb{R}^m$ and $\gamma(s,x;\phi) \in \mathbb{R}^d$ are affine in the state $x$, with time-varying coefficients parametrized by a vector $\phi \in \mathbb{R}^{d_\Phi}$: 
\begin{align}\label{eq:policy:param:general}
    h(s,x;\phi) :=& f^{(h,1)}\left(s;\phi^{(h,1)}\right) + f^{(h,2)}\left(s;\phi^{(h,2)}\right) x
                            \nonumber\\
    \gamma(s,x;\phi) :=& f^{(\gamma,1)}\left(s;\phi^{(\gamma,1)}\right) + f^{(\gamma,2)}\left(s;\phi^{(\gamma,2)}\right) x,
\end{align}    
where $f^{(h,1)} : [0,T] \times \mathbb{R}^{d_\phi^{(h,1)}} \to \mathbb{R}^m$, $f^{(h,2)} : [0,T] \times \mathbb{R}^{d_\phi^{(h,2)}} \to \mathbb{R}^{m\times n}$, $f^{(\gamma,1)} : [0,T] \times \mathbb{R}^{d_\phi^{(\gamma,1)}} \to \mathbb{R}^d$, $f^{(\gamma,2)} : [0,T] \times \mathbb{R}^{d_\phi^{(\gamma,2)}} \to \mathbb{R}^{d\times n}$ are $C^{1,1}$ functions. We restrict attention to parameter values $\phi$ for which the induced Markov controls belong to $\mathcal A_T^H$ and $\mathcal A_T^\Gamma$. The derivatives $\frac{\partial h}{\partial \phi}$ and $\frac{\partial \gamma}{\partial \phi}$ are matrices of dimension $m \times d_\phi$ and $d \times d_\phi$, respectively\footnote{\cite{silverDeterministicPolicyGradient2014} already noted that, in the case of deterministic policy, the derivative of the policy with respect to its parameters is a matrix. By contrast, in the case of stochastic policies, the derivative of the policy with respect to its parameters is typically a vector.}.

Theorem~\ref{theo:main_previouspaper} shows that the optimal value function $u$ is quadratic in the state variable with time-varying parameters, so we consider the following quadratic parametrization $v^\psi$ of the state-value function $v$, given parametrized policies $H^\phi \in \mathcal{A}^H_T$, $\Gamma^\phi \in \mathcal{A}^\Gamma_T$:
\begin{align}\label{eq:paramfunc:v}
    v^\psi(t,x)
    =& -\theta \left(\frac{1}{2}x'\tilde{Q}(t;\psi) x + \tilde{q}(t;\psi)' x + \tilde{k}(t;\psi)\right),
\end{align}
where $\tilde{Q}: [0, T] \times \mathbb{R}^{d_\psi} \to \mathbb{R}^{n\times n}$, $\tilde{q}: [0, T] \times \mathbb{R}^{d_\psi} \to \mathbb{R}^{n}$, and $\tilde{k}:  [0, T] \times \mathbb{R}^{d_\psi} \to  \mathbb{R}$ are $C^{1,1}$ functions parametrized by a $d_\psi$-element vector $\psi$, such as polynomial functions of time $t$ where the vector $\psi$ contains the polynomial's coefficients. For example, we could choose Taylor expansions of $Q$, $q$, and $k$ of degree $d_P$, requiring a total of $d_\psi = (d_P+1) \times (n^2 + n + 1)$ parameters.

\begin{remark}
The optimal value function $u$ and the optimal policies $H^*$ and $\Gamma^*$ depend on the risk-sensitivity parameter $\theta$ both explicitly and implicitly. This implicit dependence occurs via the coefficients $Q_s$, $q_s$, and $k_s$, defined at \eqref{eq:Q:Riccati}-\eqref{eq:k:integral} in Theorem~\ref{theo:main_previouspaper}, and whose equations depend on $\theta$. In the definition of the parametrized function $v^\psi$ at \eqref{eq:paramfunc:v}, the learned parameters $\tilde{Q}_s$, $\tilde{q}_s$, $\tilde{k}_s$ absorb the function $v$'s entire dependence on $\theta$.  
\end{remark}

The \emph{action-value} function $q$ can be parametrized by using its definition in \eqref{eq:q:func}. In fact, given our parametrization of the state-value function $v$ as a quadratic function of the state $x_t$ with time-varying coefficients generated by sufficiently smooth functions of time $(C^1[0,T])$, the
partial derivatives of $v^{\psi}$ are themselves at most quadratic in the state $x$ with coefficients that are (at least) continuous in $t$. Moreover,
the function $g$, which is defined at \eqref{eq:g}, is linear in $x$ and quadratic in $h$ and $\gamma$. Hence, we can consider a parametric approximation of $q$ in the following form
\begin{align}
    q^w\left( s,x, h, \gamma \right)
    :=& f^{(q,0)}(s;w)
    + f^{(q,1)}(s;w) x
    + x' f^{(q,2)}(s;w) x
    + \gamma' f^{(q,3)}(s;w)
    + \gamma' f^{(q,4)}(s;w) x
    + h' f^{(q,5)}(s;w)
                                    \nonumber\\
    &+ h' f^{(q,6)}(s;w) x
    + h' f^{(q,7)}(s;w) \gamma
    + h' f^{(q,8)}(s;w) h
    - \gamma' f^{(q,9)}(s;w) \gamma,
\end{align}
where $f^{(q,i)}$ are $C^{1,1}$ functions with appropriate dimensions, and $w \in \mathbb{R}^{d_w}$ is a vector of parameters that needs to be estimated.

\begin{remark}
The matrix-valued function $f^{(q,9)}$ need not be estimated if one imposes the known structure of the risk-sensitive game. Since $\theta g$ contains the term $-\frac{1}{2}\gamma'\gamma$, the direct quadratic contribution in $\gamma$ is obtained by setting $f^{(q,9)}=\frac{1}{2} I_d$. This restriction can therefore be hard-coded.
\end{remark}

\subsection{Actor--Critic via $q$-Learning for Episodic Tasks}\label{sec:PolicyGradientMethods:qlearning:Episodic}

The family of algorithms we discuss here parallels the semi-gradient {\it Temporal Difference (TD)} algorithms for discrete-time MDPs. As is typical of TD methods, this family of algorithms updates the time $t$ parameters based on reward information at time $s>t$ (respectively time $t_i$ and $t_{i+1}$ in discrete time). The update may be performed offline or online in a fully analogous way. Here, we focus on the online implementation (see \eqref{eq:qlearning:update:orth:online:CT} below).

Following \citet{jiaQLearningContinuousTime}, we use the martingale orthogonality condition in Corollary \ref{coro:qlearning:M:mart:orthogonality} to estimate the critic functions $v^\psi$ and $q^w$. The choice of the $\mathcal F_t$-adapted test functions specifies the resulting semi-gradient TD algorithm. For the critic, we use
\begin{align}\label{xizeta}
    \xi_t := \frac{\partial v^\psi(t,X_t)}{\partial \psi},
    \qquad
    \zeta_t := \frac{\partial q^w(t,X_t,h_t,\gamma_t)}{\partial w}.
\end{align}
These choices produce TD-style stochastic-approximation updates for the critic parameters $\psi$ and $w$.

The orthogonality condition then implies that
\begin{align}\label{eq:martcond:update}
     & \mathbf{E}_{t,x}^{\mathbb{P}^\Gamma} \left[ \int_t^s \xi_u d\tilde{M}_u\right] 
                            \nonumber\\
    =& \mathbf{E}_{t,x}^{\mathbb{P}^\Gamma} \left[ \int_t^s \xi_u \left\{ dv^{\psi} \left(u,X_u\right)
    + \left[\theta  g(u,X_u,h^\phi_u,\gamma^\phi_u;\theta) - q^w(u,X_u,h^\phi_u,\gamma^\phi_u;\theta)\right]du\right\}\right] 
    =  0,
\end{align}
for $\xi$ and analogously for $\zeta$.

The actor update is treated separately. Since the policies are deterministic maps from the time-state space into the action spaces, we use deterministic policy gradients, in the spirit of \citet{silverDeterministicPolicyGradient2014}. For the minimizing actor $h$ and maximizing actor $\gamma$, define
\begin{align}\label{eta_h_gamma}
    \eta_t^{(h)}
    &:=
    \left(
    \frac{\partial h^\phi(t,X_t)}
    {\partial \phi^{(h)}}
    \right)'
    \nabla_h q^w(t,X_t,h_t,\gamma_t),
    \\
    \eta_t^{(\gamma)}
    &:=
    \left(
    \frac{\partial \gamma^\phi(t,X_t)}
    {\partial \phi^{(\gamma)}}
    \right)'
    \nabla_\gamma q^w(t,X_t,h_t,\gamma_t).
\end{align}

Define the discrete martingale residual
\begin{align}\label{eq:martresidual:delta_k}
    \delta_k
    :=
    v^\psi(t_{k+1},x_{k+1})
    -
    v^\psi(t_k,x_k)
    +
    \theta g(t_k,x_k,h_k,\gamma_k;\theta)\Delta t
    -
    q^w(t_k,x_k,h_k,\gamma_k)\Delta t.
\end{align}
This martingale residual is used to evaluate the current policies and update the critic parameters. In implementation, $g_k = g(t_k,x_k,h_k,\gamma_k;\theta)$ is estimated from the observed data (state, securities prices, and benchmark level), actions, and value function estimates.

The critic is updated using the semi-gradient TD rules
\begin{align}
    \psi
    \leftarrow
    \psi
    +
    \alpha_\psi
    \xi_k
    \delta_k,
    \qquad
    w
    \leftarrow
    w
    +
    \alpha_w
    \zeta_k
    \delta_k
    \Delta t.
    \label{eq:qlearning:update:orth:online:CT},
\end{align}
with $\xi_k$ and $\zeta_k$ defined as in \eqref{xizeta} and evaluated at $(t_k,x_k,h_k,\gamma_k)$. The factor $\Delta t$ in the update for $w$ may equivalently be absorbed into the critic learning rate $\alpha_w$. 

The actor is then updated by a deterministic policy gradient step:
\begin{align}
    \phi^{(h)}
    &\leftarrow
    \phi^{(h)}
    -
    \alpha_h
    \eta_k^{(h)},
    \qquad
    \eta_k^{(h)}
    :=
    \left(
    \frac{\partial h^\phi(t_k,x_k)}
    {\partial \phi^{(h)}}
    \right)'
    \nabla_h q^w(t_k,x_k,h_k,\gamma_k),
    \label{eq:qlearning:update:orth:online:CT:phi:h}\\
    \phi^{(\gamma)}
    &\leftarrow
    \phi^{(\gamma)}
    +
    \alpha_\gamma
    \eta_k^{(\gamma)},
    \qquad
    \eta_k^{(\gamma)}
    :=
    \left(
    \frac{\partial \gamma^\phi(t_k,x_k)}
    {\partial \phi^{(\gamma)}}
    \right)'
    \nabla_\gamma q^w(t_k,x_k,h_k,\gamma_k).
    \label{eq:qlearning:update:orth:online:CT:phi:gamma}
\end{align}
with $h$ and $\gamma$ evaluated at the current time-state pair $(t,x) \in \left[0,T\right] \times \mathbb{R}^n$, and where  $\alpha_h, \alpha_\gamma,\alpha_\psi, \alpha_w >0 $ are learning rates. Equation \eqref{eq:qlearning:update:orth:online:CT:phi:h} contains a minus because the update descends toward the minimizing control $h$, while equation \eqref{eq:qlearning:update:orth:online:CT:phi:gamma} contains a plus because the update ascends toward the maximizing control $\gamma$. Note that the actor update separates the policy gradient into two multiplicative components: the Jacobian of the action with respect to the parameter vector $\Phi$ and the local action gradients $\nabla_h q^w$ and $\nabla_\gamma q^w$. The latter, which represent the gradients of $q$ with respect to the action $h$ and $\gamma$, respectively, are computed once \(q^w\) has been updated. This separation mirrors the usual actor--critic architecture: the critic is trained by a TD residual, while the actor is improved using the critic's action-gradient.

For implementation, partition $[0,T]$ into $K$ intervals of length $\Delta t = \frac{T}{K}$, with grid $\mathcal I={t_0,\ldots,t_{K-1}}$. The strategies $H=(h_s){s\in\mathcal I}$ and $\Gamma=(\gamma_s){s\in\mathcal I}$ are then updated on this grid. Algorithm~\ref{alg:actorcritic:qlearning:online} presents the online on-policy $q$-learning actor–critic method. In the algorithm, the continuous-time increments $dv$ and $dt$ are replaced by $v^\psi(t_{k+1},x_{k+1})-v^\psi(t_k,x_k)$ and $\Delta t$, and the learning decay function $\ell$ slows the updates across episodes. Section~\ref{sec:Implementation} presents an off-policy proof of concept using behavior policies to optimize target policies.

\begin{algorithm}
\caption{Episodic Online, On-Policy Actor--Critic via $q$-Learning}\label{alg:actorcritic:qlearning:online}
\begin{algorithmic}
    \Require
        \State Parametrized functions $f^{h,1}$, $f^{h,2}$, $f^{\gamma,1}$, $f^{\gamma,2}$, $\tilde{Q}$, $\tilde{q}$, $\tilde{k}$, and $f^{q,0}, \ldots, f^{q,9}$
        \State Input a time sequence $0 = t_0 < t_1 < \dots < t_k < \ldots < t_K = T$ indexed by $k \in \mathbb{N}, k = 1,\ldots,K$
        \State \texttt{episode} $\gets$ number of episodes
        \State Input learning rates $\alpha_h$, $\alpha_\gamma$, $\alpha_\psi$, $\alpha_w$, risk sensitivity parameter $\theta$, and learning decay function $\ell\,(\texttt{episode})$
        \State Initialize the vectors $\phi = \begin{pmatrix} \phi^{(h)} \\ \phi^{(\gamma)} \end{pmatrix}$, $\psi$, $w$, and the time 0 value of the state process $x_0$
        
    \For{$\texttt{episode} = 1,\ldots,N$}
        \For{$k=0,\ldots,K-1$}
            \State $\ell_{\texttt{episode}} \gets \ell(\texttt{episode})$ 
            \State \textit{// Generate actions and get data:}
            \State Generate actions $h_k=h^\phi(t_k,x_k)$ and $\gamma_k=\gamma^\phi(t_k,x_k)$ 

            \State \textit{// Update the environment and compute the performance}
            
            \State Estimate $g_k=g(t_k,x_k,h_k,\gamma_k;\theta)$ from observed data (state, securities, and benchmark), actions, and value function estimates.

            \State Compute critic gradients:
            \(
            \xi_k=\partial_\psi v^\psi(t_k,x_k),
            \qquad
            \zeta_k=\partial_w q^w(t_k,x_k,h_k,\gamma_k).
            \)
            \State Compute actor gradients:
            \begin{align*}
            \eta_k^{(h)}
            =
            \left(\partial_{\phi^{(h)}}h^\phi(t_k,x_k)\right)'
            \nabla_h q^w(t_k,x_k,h_k,\gamma_k),
            \quad
            \eta_k^{(\gamma)}
            =
            \left(\partial_{\phi^{(\gamma)}}\gamma^\phi(t_k,x_k)\right)'
            \nabla_\gamma q^w(t_k,x_k,h_k,\gamma_k).
            \end{align*}
            \State Observe $x_{k+1}$.
            \State Compute:
            $\delta_k
            =
            v^\psi(t_{k+1},x_{k+1})
            -
            v^\psi(t_k,x_k)
            +
            \theta g_k\Delta t
            -
            q^w(t_k,x_k,h_k,\gamma_k)\Delta t.$

            \State Update critics:
            $\psi\gets\psi+\ell_{\texttt{episode}} \alpha_\psi\xi_k\delta_k,
            \quad
            w\gets w+\ell_{\texttt{episode}} \alpha_w\zeta_k\delta_k\Delta t.$
            \State Update actors:
            $\phi^{(h)}
            \gets
            \phi^{(h)}
            -
            \ell_{\texttt{episode}} \alpha_h\eta_k^{(h)},
            \quad
            \phi^{(\gamma)}
            \gets
            \phi^{(\gamma)}
            +
            \ell_{\texttt{episode}} \alpha_\gamma\eta_k^{(\gamma)}.$
        \EndFor
    \EndFor
\end{algorithmic}
\end{algorithm}

\subsection{Ergodic Problem and Actor--Critic via $q$-Learning for Continuing Tasks}\label{sec:PolicyGradientMethods:qlearning:Continuing}

According to \eqref{eq:SDG:ergodic}, the long-run average criterion $\mathcal{J}_{\infty}$ associated with the strategies $H= \left(h_t\right)_{t \geq 0} \in \bar{\mathcal{A}}^H$, $\Gamma = \left(\gamma_t\right)_{t \geq 0} \in \bar{\mathcal{A}}^{\Gamma}$ is
\begin{align}\label{eq:valuefunc:v:ergodic}
    \mathcal{J}_{\infty}(x,H,\Gamma;\theta)   
    := \limsup_{T \to \infty} T^{-1}\mathbf{E}_{x}^{\mathbb{P}^\Gamma} \left[ \theta \int_{0}^{T} g(X_t,h_t,\gamma_t;\theta) \, dt \right],
\end{align}
where $g$ is defined at \eqref{eq:g}. The corresponding game value is
\begin{align}\label{eq:ergodic:game:value:RL}
    \rho
    =
    \inf_{H\in\bar{\mathcal A}^H}
    \sup_{\Gamma\in\bar{\mathcal A}^{\Gamma}}
    \mathcal J_\infty(x,H,\Gamma;\theta).
\end{align}
In line with \eqref{eq:BellmanIsaacs:ergodic}--\eqref{eq:rho}, the value function $v$, together with the scalar long-run value $\rho$, solves the ergodic PDE
\begin{align}\label{eq:PDEcharacterization:v:ergodic}
    \left[b + B x + \Lambda \gamma \right]' \frac{\partial v}{\partial x}
    + \frac{1}{2} \tr \left(\Lambda\Lambda' \frac{\partial^2 v}{\partial x^2} \right)
    +\theta g(x,h,\gamma;\theta) 
    = \rho, \; x \in \mathbb{R}^n.
\end{align}
From~\eqref{eq:baru}, we get a quadratic approximation of the value function $v$,
that, in line with \eqref{eq:paramfunc:v}, we denote by $\tilde{v}(x) = -\theta \left( \frac{1}{2} x' \tilde{Q} x + \tilde{q}'x \right)$.
We also get the following approximation for the ergodic $q$-function:
\begin{align}\label{eq:q:func:ergodic}
    q\left(x,h,\gamma;\theta\right)
    := 
    \left[b + B x + \Lambda \gamma \right]' \frac{\partial v}{\partial x} 
    + \frac{1}{2} \tr \left(\Lambda\Lambda' \frac{\partial^2 v}{\partial x^2} \right)
    +\theta g(x,h,\gamma;\theta)
    - \rho.
\end{align}
An ergodic analogue of Proposition~\ref{prop:qlearning:Martingale_condition} follows from the same martingale-characterization argument; see also \citet[Theorem~9]{jiaQLearningContinuousTime}. The main changes are the admissible classes and the additional long-run scalar value \(\rho\).

Algorithm \ref{alg:actorcritic:qlearning:online} for episodic tasks extends with minor modifications to continuing tasks. The three main differences are: the form of the value functions differs, as discussed above; the algorithm for continuing tasks only loops through time; and the time dependence disappears from the parametrization. This last element leads to several simplifications. The policy parametrization is of the form 
\begin{align}
        h(x;\phi) 
        = \phi^{(h,1)} + \phi^{(h,2)} x,
         \qquad
        \gamma(x;\phi) 
        = \phi^{(\gamma,1)} + \phi^{(\gamma,2)} x 
\end{align}
where $\phi^{(h,1)} \in \mathbb{R}^m, \phi^{(h,2)} \in \mathbb{R}^{m \times n}$, $\phi^{(\gamma,1)} \in \mathbb{R}^d, \phi^{(\gamma,2)} \in \mathbb{R}^{d \times n}$ are constants. Now, the parameterization for the state value and action value functions becomes
$
    v^\psi(x)
    = -\theta \left(\frac{1}{2}x'\Psi x + \psi' x \right)
$,
for $\Psi \in \mathbb{R}^{n \times n}, \psi \in \mathbb{R}^n$, and 
\begin{align}
    q^w\left(x, h, \gamma \right)
    :=& w_0
    + w_x' x
    + x' W_{xx} x
    + w_\gamma'\gamma
    + \gamma' W_{\gamma x} x
    + w_h' h
    + h' W_{hx} x
    + h' W_{h\gamma} \gamma
    + h' W_{hh} h
    - \gamma' W_{\gamma\gamma} \gamma,
\end{align}
where all coefficients have conformable dimensions, with \(W_{xx}\), \(W_{hh}\), and \(W_{\gamma\gamma}\) taken symmetric when the corresponding quadratic terms are identified.

\subsection{The Kelly Criterion and Its Implication for Explainability}\label{sec:PolicyGradientMethods:Explainability}

The actor–critic framework also clarifies how the learned allocation can be explained. The Kelly criterion is not only a limiting case of the risk-sensitive benchmarked problem; it provides a natural reference allocation. While reinforcement-learning algorithms produce an allocation for a set of learned parameters, they do not by themselves explain \emph{why} that allocation is recommended. This is especially true for long-term asset allocation algorithms. The structure of the benchmarked risk-sensitive solution provides such an explanation. 

Assuming we know the value of the model's coefficients, \citet{LleoRunggaldier_RSIMviaDuality_2026} establishes that the optimal asset allocation has the form of a \emph{fractional Kelly strategy}, consisting of a combination of three funds: the Kelly portfolio with allocation $h^{K}(s, X_s)$, a benchmark-tracking portfolio with allocation $h^\mathrm{Bench}(s, X_s)$, and an Intertemporal Hedging Portfolio (IHP) with allocation $h^\mathrm{IHP}(s, X_s)$. Mathematically,
\begin{align}
h^*(s,X_s)
=
f\,h^K(s,X_s)
+
(1-f)\,h_s^{\mathrm{Bench}}
-
(1-f)\,h^{\mathrm{IHP}}(s,X_s),
\label{eq:summary_fractional_kelly}
\end{align}
where $f:=\frac1{\theta+1}$, $h^{K}(s, X_s)$ is given at \eqref{eq:hhat:Kelly}, and
\begin{align}
h_s^{\mathrm{Bench}}
=
\left(\Sigma_s\Sigma_s'\right)^{-1}
\Sigma_s\Xi_s,
\qquad
h^{\mathrm{IHP}}(s,X_s)
=
\left(\Sigma_s\Sigma_s'\right)^{-1}
\Sigma_s\Lambda_s'
\left(q_s+Q_sX_s\right),
\end{align}
where $Q_s$ and $q_s$ are respectively the Riccati matrix and vector coefficients appearing in the quadratic value function. Importantly, the relative allocation between the three funds is fixed and depends solely on the investor's risk-sensitivity, a known quantity.     

Therefore, if we either know or can learn the asset allocation within these three constituent funds, we can decompose the asset allocation $h_s^{RL}$ produced by a reinforcement learning algorithm,  such as the actor--critic algorithm presented above, as:
\begin{align}
    h_s^{RL} = \frac{1}{\theta+1} h^{K}(s, X_s) + \frac{\theta}{\theta+1} h^\mathrm{Bench}(s, X_s) -\frac{\theta}{\theta+1} h^\mathrm{IHP}(s, X_s) + h^{\epsilon}, 
\end{align}
where $h^{\epsilon}$ is a residual term capturing approximation, estimation, and numerical errors.

The Kelly portfolio can be learned independently using an actor--critic approach, as discussed above. Three simplifications occur: (1) the problem is not a game, so there is no adversarial control $\gamma$, (2) we do not need to change measure, and (3) the state is uncontrolled.

The allocation to the \emph{benchmark-tracking portfolio} can be interpreted as the minimum-variance hedge ratio produced by hedging the benchmark $L$ using the $m$ securities $S_i, i=1,\ldots,m$ from the asset market. Therefore, $h^\mathrm{Bench}$ can be estimated online as the slope vector of a sequential multiple linear regression of the benchmark returns against the securities returns\footnote{Alternatively, one could estimate the quadratic variations $\Sigma_s\Sigma_s'$ and cross variation $\Sigma_s\Xi_s$ sequentially, and then multiply these two quantities to obtain an estimate for $h_s^\mathrm{Bench}$.}.  

We can use a similar approach for the \emph{IHP}. Decompose the asset allocation of the IHP, $h_s^\mathrm{IHP}$ into two multiplicative terms, $\left(\Sigma_s\Sigma_s'\right)^{-1}\Sigma_s\Lambda_s'$ and $\left(q_s+Q_s X_s\right)$. The first term can be interpreted as the minimum-variance hedge ratio obtained by hedging the state variables $X$ using the $m$ securities $ S_i$, $ i=1,\ldots,m$. So we can learn it using a sequential multiple linear regression of the state variables against the securities returns. The second term involves the learned estimates of the Riccati matrix and vector in the parametrized value function, which are already produced by the actor--critic algorithm.

Therefore, for $k$ clients and $\ell\leq k$ benchmarks, the explanatory decomposition requires $k+1$ actor--critic runs, one per client and one for the Kelly portfolio, together with $\ell+1$ sequential regressions, one per benchmark and one for the common IHP hedge. The additional cost is therefore limited relative to learning the client-specific policies.

\section{Implementation}
\label{sec:Implementation}

This section presents a proof of concept for the actor--critic reinforcement-learning framework developed in Section~\ref{sec:PolicyGradientMethods}, focusing on a continuing task associated with an ergodic investment problem. The aim is not to propose a fully model-free implementation, but to show that the analytical model and its actor--critic formulation lead to a concrete computational procedure. We use financial market data to compute the parameters for the analytical ergodic risk-sensitive benchmarked asset management model derived in Section \ref{sec:RSBAM:infinite} that will serve as a reference to assess the learning speed and accuracy of the actor--critic algorithm. Specifically, this section tests whether a proof-of-concept algorithm with a reduced critic, that is, a critic reduced to the set of parameters required by the actors, can learn the analytical model's optimal policies from the ground up, using only data simulated from the analytical model. The implementation is carried out in R.

\subsection{Investment Problem and Financial Data}

We consider a U.S. equity fund with risk-sensitivity parameter $\theta=1$. The benchmark allocates 90\% to the S\&P 500 and 10\% to the S\&P 400. The manager uses a six-factor model consisting of the five Fama--French factors \citep{FF2015} and Momentum \citep{Carhart1997}.

The investment universe contains $m=13$ U.S. exchange-traded funds. The first 11 ETFs track the S\&P 500 GICS sectors, so the S\&P 500 can be replicated by holding these ETFs at their sector weights. The remaining two ETFs are the iShares Core S\&P 400 Mid-Cap ETF (IJH) and the iShares Core S\&P 600 Small-Cap ETF (IJR). Since the S\&P 600 is not included in the benchmark, any allocation to small-capitalization stocks is tactical.

Model parameters are estimated from daily returns over June 20, 2018, to December 31, 2024, giving $\mathcal T=1644$ observations and $\Delta t=1/252$. June 20, 2018, is the earliest date for which all series are available after the GICS sector revision. The sample spans a volatile period in U.S. equity markets, including the 2019 rise, the COVID-19 crash in March 2020, and the subsequent rally. Factor and money-market returns are obtained from Kenneth French's data library\footnote{The data are available online at \url{http://mba.tuck.dartmouth.edu/pages/faculty/ken.french/Data_Library/f-f_factors.html}.}, while ETF returns are taken from the CRSP database\footnote{\copyright 2026 Center for Research in Security Prices (CRSP), LLC.}.

\subsection{Proof-of-Concept Reinforcement-Learning Algorithm}
\label{subsec:FullDiagnostic4B}

The proof-of-concept algorithm focuses on learning the two actors, namely the investment policy $H$ and the adversarial policy $\Gamma$, rather than on approximating the full scalar action-value function. This choice reflects both the analytical structure of the saddle point and the economic role of the controls. The portfolio allocation $h$ is the economically relevant decision variable, while $\gamma$ is the auxiliary adversarial control induced by the duality representation. In this implementation, the ergodic relative value function is fixed at its analytical form,
$\bar u(x)
    =
    -\theta
    \left(
    \frac{1}{2} x' \bar Q x + \bar q'x
    \right)
$, where $\bar Q$ and $\bar q$ are expressed in standardized coordinates. The critic learns only the actor-relevant gradients of the action-value function.

The portfolio and adversarial actors are affine in the augmented standardized state
\begin{align}
    x_{\mathrm{aug}}=(1,x')',
    \qquad
    h^\phi(x)=\Phi^{(h)}x_{\mathrm{aug}},
    \qquad
    \gamma^\phi(x)=\Phi^{(\gamma)}x_{\mathrm{aug}}.
\end{align}
The corresponding analytical saddle-point actors are denoted by $\Phi^{(h),*}$ and $\Phi^{(\gamma),*}$. Exploration is introduced through behavior controls
\begin{align}
    h^\mathrm{beh}(x)=h^\phi(x)+\epsilon^h,
    \qquad
    \gamma^\mathrm{beh}(x)=\gamma^\phi(x)+\epsilon^\gamma,
\end{align}
where $\epsilon^h$ and $\epsilon^\gamma$ are mean-zero perturbations.

Since the analytical $q$-function is quadratic in $(x,h,\gamma)$, its gradients with respect to $h$ and $\gamma$ are affine in $z = (1,x',h',\gamma')'$. We therefore use the reduced critic
\begin{align}\label{eq:reducedcritic:gradient}
    \widehat{\nabla_h q}(x,h,\gamma)=M_hz,
    \qquad
    \widehat{\nabla_\gamma q}(x,h,\gamma)=M_\gamma z,
\end{align}
where $M_h\in\mathbb R^{m\times(1+n+m+d)},
\;
M_\gamma\in\mathbb R^{d\times(1+n+m+d)}$.

The matrices $M_h$ and $M_\gamma$ are learned by recursive least squares from behavior-action observations. To this end, we construct the TD scalar target from the fixed relative value function:
\begin{align}
    q_k^{TD}
    = 
    \frac{\bar u(x_{k+1})-\bar u(x_k)}{\Delta t}
    +
    \theta g(x_k,h_k^\mathrm{beh},\gamma_k^\mathrm{beh};\theta),
\end{align}
where, n this proof-of-concept implementation, $x_{k+1}$ is generated from the standardized dynamics
\begin{align}
    x_{k+1}
    =
    x_k
    + \left(b + B x_k + \Lambda \gamma_k^\mathrm{beh} \right) \Delta t
    + \Lambda \varepsilon_k \sqrt{\Delta t},
    \qquad
    \varepsilon_k\sim N(0,I_d).
\end{align}
and $g(x_k,h_k^\mathrm{beh},\gamma_k^\mathrm{beh};\theta)$ is computed using the analytical formula \eqref{eq:g}.

Central finite differences of $q_k^{TD}$, evaluated at the behavior controls, generate targets $\widehat{\nabla_h q}^{\,TD}_k, \; \widehat{\nabla_\gamma q}^{\,TD}_k$, which are used with the behavior feature vector
$
z_k^\mathrm{beh}
=
\left(1,x_k',(h_k^\mathrm{beh})',(\gamma_k^\mathrm{beh})'\right)'
$ to update $M_h$ and $M_\gamma$. 

The updated critic is then evaluated at the target controls $(h_k,\gamma_k)$ to update the actors:
\begin{align}
\Phi^{(h)}
\leftarrow
    \Phi^{(h)}
    -
    \alpha_h
    (\theta\Sigma\Sigma')^{-1}
    \widehat{\nabla_h q}_k
    \frac{x_{\mathrm{aug},k}'}{1+\|x_{\mathrm{aug},k}\|^2},
\end{align}
\begin{align}
    \Phi^{(\gamma)}
    \leftarrow
    \Phi^{(\gamma)}
    +
    \alpha_\gamma
    \widehat{\nabla_\gamma q}_k
    \frac{x_{\mathrm{aug},k}'}{1+\|x_{\mathrm{aug},k}\|^2}.
\end{align}
The signs reflect the minimization over $h$ and maximization over $\gamma$. The preconditioner $(\theta\Sigma\Sigma')^{-1}$ is consistent with the local curvature of the analytical Hamiltonian in the portfolio direction, and the factor $(1+\|x_{\mathrm{aug},k}\|^2)^{-1}$ stabilizes the affine actor update.

The proof-of-concept implementation learns 2,993 entries: 224 for the two actors, 1,248 for the two gradient critics, and 1,521 for the auxiliary RLS covariance matrix $P$. The value-function coefficients $\bar Q$ and $\bar q$ are fixed at their analytical values. Algorithm~\ref{alg:Proof_of_concept_algo} presents the step-by-step procedure. 

\begin{algorithm}
\caption{Proof-of-concept online off-policy algorithm for the continuing case}
\label{alg:Proof_of_concept_algo}
\begin{algorithmic}[1]
\Require Standardized state sample \(\{x_i\}\); analytical coefficients \(\bar Q,\bar q\); model coefficients \(a,A,b,B,\Sigma,\Lambda,\Xi,c,C\); risk-sensitivity \(\theta\); learning rates \(\alpha_h,\alpha_\gamma\); time step \(\Delta t\).
\State Initialize \(M_h,M_\gamma\) and the RLS covariance matrix \(P\).
\State Initialize actors \(\Phi^{(h)}\) and \(\Phi^{(\gamma)}\).
\For{\(k=1,\ldots\)}
    \State Sample a standardized state \(x_k\) and set \(x_{\mathrm{aug},k}=(1,x_k')'\).
    \State Compute current actor actions:
    \(
    h_k=\Phi^{(h)}x_{\mathrm{aug},k},
    \quad
    \gamma_k=\Phi^{(\gamma)}x_{\mathrm{aug},k}.
    \)
    \State Generate exploratory behavior actions from the current actions:
    \(
    h_k^\mathrm{beh}=h_k+\epsilon_k^{h},
    \quad
    \gamma_k^\mathrm{beh}=\gamma_k+\epsilon_k^{\gamma},
    \)
    with mean-zero exploratory perturbations $\epsilon_k^{h},\epsilon_k^{\gamma}$.
    \State Generate a one-step transition
    \(
    x_{k+1}
    =
    x_k+
    \left(b + B x_k + \Lambda \gamma_k^\mathrm{beh}\right)\Delta t
    +
    \Lambda \varepsilon_k\sqrt{\Delta t}.
    \)
    
    \State Form the TD scalar target at the behavior actions:
    \(
    q^{TD}_k
    =
    \frac{\bar u(x_{k+1})-\bar u(x_k)}{\Delta t}
    +
    \theta g(x_k,h_k^\mathrm{beh},\gamma_k^\mathrm{beh};\theta).
    \)
    \State Compute central finite-difference approximations from the TD scalar target:
    $
    \widehat{\nabla_h q}^{\,TD}_k,
    \;
    \widehat{\nabla_\gamma q}^{\,TD}_k.
    $
    \State Define the behavior feature vector at the behavior actions:
    $
    z_k^\mathrm{beh}=(1,x_k',(h_k^\mathrm{beh})',(\gamma_k^\mathrm{beh})')'.
    $
    \State Update \(M_h\) and \(M_\gamma\) by RLS using the targets
    $\widehat{\nabla_h q}^{\,TD}_k$ and
    $\widehat{\nabla_\gamma q}^{\,TD}_k$ and critic feature vector $z_k^\mathrm{beh}$.
    \State Evaluate learned critic gradients at the target actor actions:
    \[
    \widehat{\nabla_h q}_k=M_h(1,x_k',h_k',\gamma_k')',
    \qquad
    \widehat{\nabla_\gamma q}_k=M_\gamma(1,x_k',h_k',\gamma_k')'.
    \]
    \State Update the portfolio actor:
    \(
    \Phi^{(h)}
    \leftarrow
    \Phi^{(h)}
    -
    \alpha_h
    (\theta\Sigma\Sigma')^{-1}
    \widehat{\nabla_h q}_k
    \frac{x_{\mathrm{aug},k}'}{1+\|x_{\mathrm{aug},k}\|^2}.
    \)
    \State Update the adversarial actor:
    \(
    \Phi^{(\gamma)}
    \leftarrow
    \Phi^{(\gamma)}
    +
    \alpha_\gamma
    \widehat{\nabla_\gamma q}_k
    \frac{x_{\mathrm{aug},k}'}{1+\|x_{\mathrm{aug},k}\|^2}.
    \)
\EndFor
\end{algorithmic}
\end{algorithm}

\subsection{Numerical Outcome}

The following results are obtained from the R implementation of Algorithm~\ref{alg:Proof_of_concept_algo}. Figure~\ref{fig:error_convergence} shows the convergence of the estimation errors for the critic and actor matrices, and for the corresponding actions. Table~\ref{tab:diagnostic_errors} reports the final relative errors.

\begin{figure}[!ht]
\begin{center}
\includegraphics[width=0.70\textwidth]{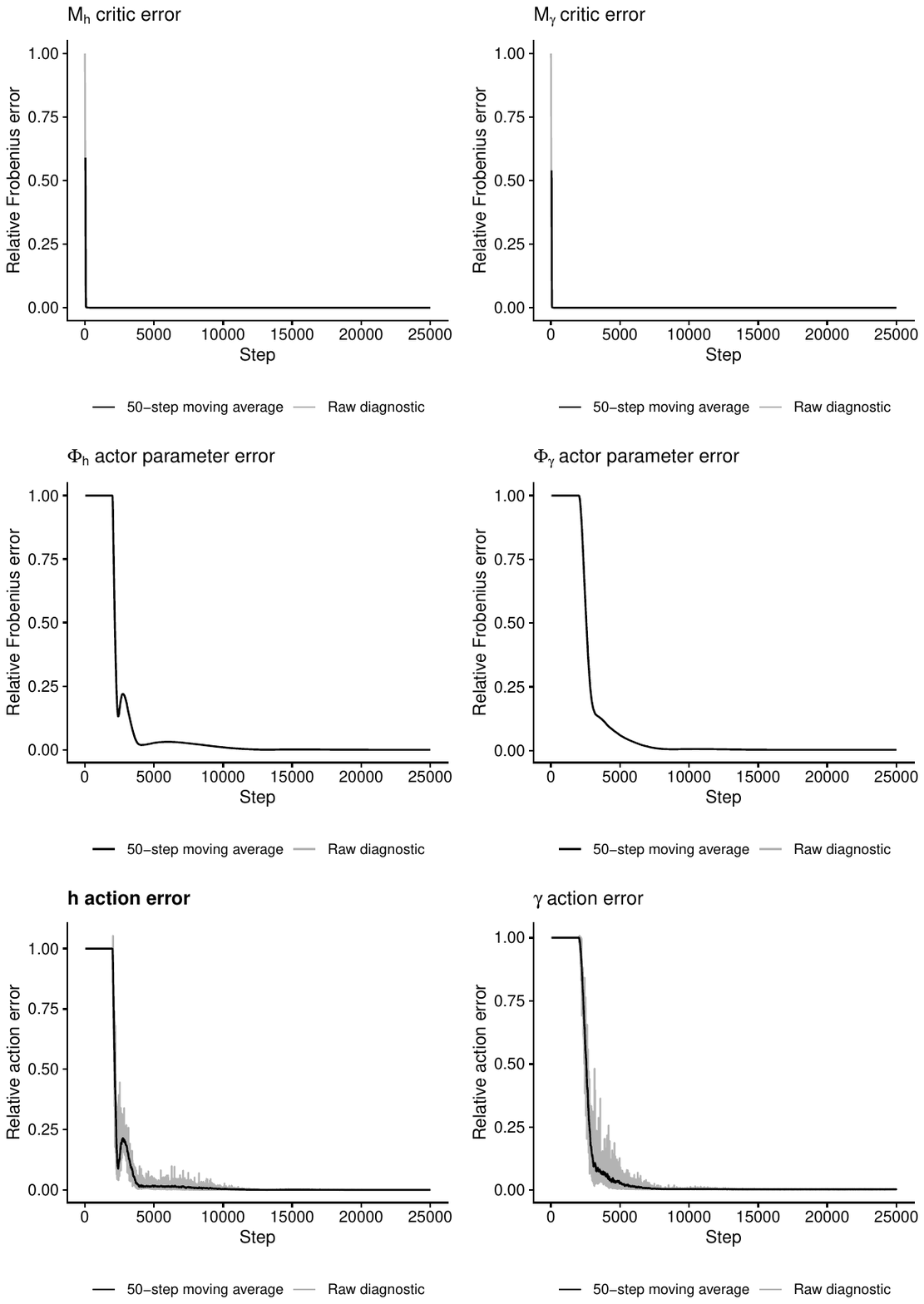}
\caption{Error diagnostics for the critic and actors}
\label{fig:error_convergence}
\end{center}
\end{figure}

\begin{table}[htbp]
\centering
\begin{tabular}{llll}
\toprule
Diagnostic & Formula & Portfolio control $h$ & Adversarial control $\gamma$ \\
\midrule
Critic matrix error
& $\frac{\|M_\cdot-M_\cdot^*\|_F}{\|M_\cdot^*\|_F}$
& $1.47\times10^{-6}$
& $2.57\times10^{-4}$ \\

TD gradient-target error
& $\left\|
    \widehat{\nabla_\cdot q}^{\,TD}
    -
    \nabla_\cdot q
    \right\|$
& $7.02\times10^{-13}$
& $1.07\times10^{-3}$ \\

Actor matrix error
& $\frac{\|\Phi^{(\cdot)}-\Phi^{(\cdot),*}\|_F}{\|\Phi^{(\cdot),*}\|_F}$
& $7.22\times10^{-5}$
& $3.35\times10^{-3}$ \\

Action error
& $\frac{\|\cdot^\phi(x)-\cdot^*(x)\|}{\|\cdot^*(x)\|}$
& $5.69\times10^{-5}$
& $3.28\times10^{-3}$\\
\bottomrule
\end{tabular}
\caption{Final diagnostic errors in the proof-of-concept implementation. In the formulas of the `Formula' column, the dot `$\cdot$' is a placeholder for the control and takes the value $h$ for the portfolio control or $\gamma$ for the adversarial control. The matrices $M_h^*$ and $M_\gamma^*$ denote the analytical gradient-critic matrices induced by the optimal $q$-function in standardized coordinates. For a matrix $M$, $\|M\|_F=\sqrt{\tr(M'M)}$ denotes the Frobenius norm.}
\label{tab:diagnostic_errors}
\end{table}

The portfolio-gradient critic is learned essentially exactly, whereas the adversarial-gradient critic retains a small residual error. This asymmetry is structural. Under the $\mathbb P^\Gamma$-dynamics, the standardized state $x_t$ is governed by the adversarial control $\gamma$, whereas the portfolio allocation $h$ does not enter the state equation. Therefore, the TD-generated gradient with respect to $h$ differentiates only the running reward term and is not affected by transition-gradient effects. By contrast, $\gamma$ enters the transition dynamics. A one-step TD finite-difference approximation involves $\Lambda'D\bar u(x_{t+\Delta t})$ rather than the infinitesimal generator term $\Lambda'D\bar u(x)$. Since
$ 
    D\bar u(x) = -\theta(\bar Qx+\bar q)    
$, the finite-step discrepancy is approximately
$
    -\theta\Lambda'\bar Q(x_{t+\Delta t}-x)
$. This explains why the $\gamma$-actor carries a small residual error even when the $h$-actor is essentially exact.

From an asset-management perspective, this asymmetry is favorable. The economically relevant decision variable is the portfolio allocation $h$, while $\gamma$ is the auxiliary adversarial control introduced by the free energy--entropy duality. The diagnostics indicate that the duality-based formulation places the transition-learning burden primarily on $\gamma$, while leaving the portfolio actor with a clean and accurate TD learning signal.

Overall, these diagnostics provide a proof of concept for the partly model-based actor--critic framework. The reduced gradient critic exploits the quadratic structure of the action-value function, while the affine actors exploit the analytical form of the saddle-point controls. In the data-calibrated ETF allocation problem considered here, the resulting algorithm efficiently learns the optimal policy under the simulated standardized dynamics, thereby supporting the computational relevance of the theoretical actor--critic formulation.

\section{Conclusion}\label{sec:Conclusion}

This paper develops a reinforcement-learning approach to a risk-sensitive benchmarked asset allocation problem in a partly model-based setting. Using free energy--entropy duality, we reformulate the nonstandard benchmarked control problem as a linear--quadratic--Gaussian stochastic differential game under an equivalent probability measure, yielding explicit finite- and infinite-horizon solutions. The actor--critic formulation then uses this analytical structure to design interpretable actor and critic parametrizations while relaxing the assumption that all model coefficients are known. The resulting investment strategy admits a clear economic interpretation through fractional Kelly decompositions. A proof-of-concept implementation calibrated to financial market data illustrates the practical relevance of the approach. It also reveals a favorable structural asymmetry: the portfolio actor receives a clean TD learning signal because the portfolio allocation does not enter the state transition, whereas the auxiliary adversarial actor carries the residual error associated with the finite-step transition approximation.

\appendix

\section{Proof of Theorem~\ref{theo:ergodic}}\label{app:proof:theo:ergodic}

\paragraph{Admissibility of the candidate strategies.}

We start by establishing that the candidate strategies $H^*$ and $\Gamma^*$ are Borel-measurable and Markov. For the value function $\bar u(x)$ in Theorem~\ref{theo:ergodic}, the candidate strategies $H^*$ and $\Gamma^*$ induced by \eqref{eq:hbar} and \eqref{eq:gammabar:2} become
\begin{align}
    h^*(x)
    =&  
    \frac{1}{\theta+1}\left(\Sigma\Sigma'\right)^{-1}\left[
        \left(A - \theta \Sigma\Lambda' \bar{Q} \right)x
        + a
        + \theta \Sigma (\Xi - \Lambda' \bar{q})
        \right]
                                \label{eq:hhat:verif}\\
    \gamma^*(x) 
    =&  
    \left\{
        - \frac{\theta}{\theta+1} \Sigma'\left(\Sigma\Sigma'\right)^{-1}A 
        - \theta \mathcal{P}^{-}\Lambda' \bar{Q}
    \right\}x 
    - \frac{\theta}{\theta+1} \Sigma'\left(\Sigma\Sigma'\right)^{-1}a
     + \theta \mathcal{P}^{-} (\Xi - \Lambda' \bar{q}).
                            \label{eq:gammahat:verif}
\end{align}  

Therefore, the defining policies $h^*(\cdot)$ and $\gamma^*(\cdot)$ are Borel-measurable, and the induced strategies $H^*$ and $\Gamma^*$ are Markov. Moreover, under $\mathbb{P}^{\Gamma^*}$, the state process $X = \left(X_t\right)_{t \geq 0}$ remains Gaussian. To see this, substitute \eqref{eq:gammahat:verif} into \eqref{eq:state:Pgamma:FO} to obtain the following dynamics:
\begin{align}\label{eq:state:Pgamma:ergodic}
    d X_t
    =& \left\{ 
        \left( \kappa - K_0 \bar{q} \right)
        + K_2 X_t \right\}dt
        + \Lambda dW^{\Gamma^*}_t,
\end{align}
where $K_0$ is given at \eqref{eq:notation:K0} and we defined $K_2 := K_1 -K_0 \bar Q$ and $\kappa 
:= b 
- \frac{\theta}{\theta+1} \Lambda\Sigma'\left(\Sigma\Sigma'\right)^{-1}a
+ \theta \Lambda \mathcal{P}^{-}\Xi$ for compactness.    

Next, we consider the mean $m_t$ and covariance $P_t$ of $X_t$ under $\mathbb{P}^{\Gamma^*}$. They solve
\begin{align}\label{eq:ODE:mt_and_Pt}
    \frac{dm_t}{dt} 
    = \left( \kappa - K_0 \bar{q} \right) + K_2 m_t.
    \qquad
    \frac{dP_t}{dt} 
    =  K_2 P_t + P_t K_2' + \Lambda\Lambda',
\end{align}
For deterministic $X_0=x$,
$    
    m_t 
    = e^{K_2 t}x  
    + \int_0^t e^{ K_2(t-s)}\left( \kappa - K_0 \bar{q}\right) \, ds
$ and
$
    P_t 
    = \int_0^t 
        e^{K_2 (t-s)} \Lambda\Lambda' e^{ K_2'(t-s)} \, ds
$. If $X_0$ has covariance $P_0$, the initial condition $e^{K_2 t}P_0e^{K_2't}$ must be added to the solution for $P_t$.

By Theorem 4.2 in \citet{kuna02}, the matrix $K_1 - K_0 Q_T$ converges, as $T\to\infty$, to the stable matrix $K_2=K_1-K_0 \bar Q$. Consequently, the limits of $m_T$ and $P_T$ exist as $T\to\infty$. Moreover, if \eqref{eq:ctrlcond:BA:theorem} holds, then $(K_2,\Lambda)$ is controllable. If \eqref{eq:theo:ergodic:mainassumption} also holds, Lemma 5.1 in \citet{kuna02} gives $\theta\bar Q^{-1}>\bar P\geq P_T$. Thus, the limiting covariance matrix $\bar P$ exists and is nondegenerate.

Equations \eqref{eq:hbar} and \eqref{eq:state:Pgamma:ergodic} show that $h^*(x)$ and $\gamma^*(x)$ ar affine in $x$. Since $X$ is Gaussian under $\mathbb P^{\Gamma^*}$, $h^*(X_s)$ is square-integrable on every interval $[0,T]$, and $H^*\big|_{[0,T]}\in\mathcal A_T^H$ for all $T>0$.

It remains to check the Girsanov condition for $\Gamma^*$. For convenience, define the density process associated with $\Gamma^*$ on $[0,T]$ by
\begin{align}\label{eq:chi:GammaStar:ts}
\chi_t^{\Gamma^*}
:=
\exp\left\{
-\frac12\int_0^t \|\gamma^*(X_s)\|^2\,ds
+\int_0^t \gamma^*(X_s)' \, dW_s
\right\},
\qquad t\in[0,T].
\end{align}
Since $\gamma^*(x)$ is affine and the state equation is linear, the stochastic exponential is a true martingale on finite horizons by the standard linear-growth martingale criterion for Girsanov densities. Hence $\chi^{\Gamma^*}$ is an exponential martingale on every $[0,T]$. Conditions (i)-(iii) in Definition \ref{def:classAgammaT:fullobs} follow, so $\Gamma^* \big|_{[0,T]} \in \mathcal{A}^{\Gamma}_T$ for all $T>0$, so $\Gamma^* \in \bar{\mathcal{A}}^{\Gamma}$.

\paragraph{Verification of the saddle-point inequalities.}

We next prove the saddle-point inequalities. Let
\begin{align}
    F(x,h,\gamma)
    :=
    \mathcal L^{h,\gamma}\bar u(x) +\theta g(x,h,\gamma;\theta).
\end{align}
Since $(\bar h,\bar\gamma)$ is a pointwise saddle point of the Hamiltonian, the ergodic Bellman--Isaacs equation implies, for all admissible $h$ and $\gamma$,
\begin{align}
    F(x,\bar h(x),\gamma)
    \leq
    \rho
    \leq
    F(x,h,\bar\gamma(x)).
\end{align}

Let $\Gamma\in\bar{\mathcal A}^{\Gamma}$ be arbitrary and fix the minimizing policy $H^*$. By It\^o's formula under $\mathbb P^\Gamma$,
\begin{align}
    d\bar u(X_t)
    =
    \mathcal L^{h^*,\gamma}\bar u(X_t) \, dt
    +D\bar u(X_t)'\Lambda \, dW_t^\Gamma.
\end{align}
Taking expectations and using the martingale property of the stochastic integral gives
\begin{align}
    \mathbf E_x^{\mathbb P^\Gamma}
    \left[
    \int_0^T
    \mathcal L^{h^*,\gamma}\bar u(X_t) \, dt
    \right]
    =
    \mathbf E_x^{\mathbb P^\Gamma}[\bar u(X_T)] - \bar u(x).
\end{align}
Since
$\mathcal L^{h^*,\gamma}\bar u(X_t) + \theta g(X_t,h_t^*,\gamma_t;\theta) \leq \rho$,
we obtain
\begin{align}
    \frac{1}{T}
    \mathbf E_x^{\mathbb P^\Gamma}\left[
        \theta\int_0^T g(X_t,h_t^*,\gamma_t;\theta) \,dt
    \right]
    \leq
    \rho 
    + \frac{\bar u(x)}{T} 
    - \frac{1}{T} \mathbf E_x^{\mathbb P^\Gamma}[\bar u(X_T)].
\end{align}
By the definition of $\bar{\mathcal A}^{\Gamma}$, the first two moments of $X_T$ under $\mathbb P^\Gamma$ remain bounded, and in fact have finite limits as $T\to\infty$. Since $\bar u$ is quadratic, it follows that
$
    \lim_{T\to\infty}
    \frac{1}{T} \mathbf E_x^{\mathbb P^\Gamma}[\bar u(X_T)]
    = 0
$. Therefore,
\begin{align}
\limsup_{T\to\infty}
    \frac1T
    \mathbf E_x^{\mathbb P^\Gamma}
    \left[
    \theta\int_0^T
    g(X_t,h_t^*,\gamma_t;\theta) \, dt
    \right]
    \leq
    \rho.
\end{align}

Conversely, let $H\in\bar{\mathcal A}^H$ be arbitrary and fix the maximizing policy $\Gamma^*$. Since
\begin{align}
    \mathcal L^{h,\gamma^*}\bar u(X_t) + \theta g(X_t,h_t,\gamma_t^*;\theta)
    \geq \rho,
\end{align}
the same It\^o argument under $\mathbb P^{\Gamma^*}$ gives
\begin{align}
    \frac{1}{T} \mathbf E_x^{\mathbb P^{\Gamma^*}}\left[
        \theta\int_0^T g(X_t,h_t,\gamma_t^*;\theta) \, dt    
    \right]
    \geq
    \rho 
    + \frac{\bar u(x)}{T}
    - \frac{1}{T} \mathbf E_x^{\mathbb P^{\Gamma^*}}[\bar u(X_T)].
\end{align}
The admissibility of $\Gamma^*$ implies that the final term again vanishes after division by $T$. Hence
\begin{align}
    \liminf_{T\to\infty} \frac{1}{T} \mathbf E_x^{\mathbb P^{\Gamma^*}} \left[
        \theta\int_0^T g(X_t,h_t,\gamma_t^*;\theta) \, dt
    \right]
    \geq
    \rho,
\end{align}
and hence the corresponding $\limsup$ inequality in Theorem~\ref{theo:ergodic} also holds.

Taking $H=H^*$ and $\Gamma=\Gamma^*$, the pointwise saddle-point equality and the same It\^o argument give
\begin{align}
    \lim_{T\to\infty}
    \frac{1}{T}
    \mathbf E_x^{\mathbb P^{\Gamma^*}}
    \left[
    \theta\int_0^T g(X_t,h_t^*,\gamma_t^*;\theta)\,dt
    \right]
    =
    \rho.
\end{align}

\paragraph{Optimality.}
The saddle-point inequalities show that $H^*$ attains the infimum and $\Gamma^*$ attains the supremum in the ergodic game. Hence the pair $(H^*,\Gamma^*)$ is optimal.

%
%


\end{document}